\newcommand{\be}{\begin{equation}}
\newcommand{\ee}{\end{equation}}
\renewcommand{\[}{\begin{equation}}
\renewcommand{\]}{\end{equation}}
\newcommand{\ba}{\begin{eqnarray}}
\newcommand{\ea}{\end{eqnarray}}
\newtheorem*{thm}{Theorem}
\newtheorem{lemma}{Lemma}
\newcommand{\one}{\leavevmode\hbox{\small1\normalsize\kern-.33em1}}
\newcommand{\moy}[1]{\langle #1 \rangle}
\newcommand{\demi}{\frac{1}{2}}
\newcommand{\si}{\sigma}
\newcommand{\ket}[1]{\mbox{$ | #1 \rangle $}}
\newcommand{\braket}[2]{\mbox{$ \langle #1 | #2 \rangle $}}
\newcommand{\bra}[1]{\mbox{$ \langle #1 | $}}
\begin{document}

\title{Device-independent quantum key distribution\\
secure against collective attacks}
\author{Stefano Pironio$^{1}$\thanks{stefano.pironio@unige.ch}, Antonio Ac\'{\i}n$^{2,3}$, Nicolas Brunner$^4$,\\ Nicolas Gisin$^1$, Serge Massar$^5$, Valerio Scarani$^6$\\[0.5em]
$^1$ Group of Applied Physics, University of Geneva\\
$^2$ ICFO-Institut de Ciencies Fotoniques, 08860 Castelldefels, Spain\\
$^3$ ICREA-Instituci\'o Catalana de Recerca i Estudis
Avan\c cats\\
Pg. Lluis Companys 23, 08010 Barcelona, Spain\\
$^4$ H.H. Wills Physics Laboratory, University of Bristol\\
$^5$ Laboratoire d'Information Quantique, Universit\'{e} Libre de Bruxelles\\
C.P 225, Boulevard du Triomphe, B-1050 Bruxelles, Belgium \\
$^6$ Centre for Quantum Technologies and \\
Department of Physics,
National University of Singapore }
\date{}
\maketitle

\begin{abstract}
Device-independent quantum key distribution (DIQKD) represents a
relaxation of the security assumptions made in usual quantum key
distribution (QKD). As in usual QKD, the security of DIQKD follows
from the laws of quantum physics, but contrary to usual QKD, it
does not rely on any assumptions about the internal working of the
quantum devices used in the protocol. We present here in detail
the security proof for a DIQKD protocol introduced in [Phys. Rev.
Lett. 98, 230501 (2008)]. This proof exploits the full structure
of quantum theory (as opposed to other proofs that exploit the
no-signalling principle only), but only holds again collective
attacks, where the eavesdropper is assumed to act on the quantum
systems of the honest parties independently and identically at
each round of the protocol (although she can act coherently on her
systems at any time). The security of any DIQKD protocol
necessarily relies on the violation of a Bell inequality. We
discuss the issue of loopholes in Bell experiments in this
context.
\end{abstract}

\vfill
\pagebreak

\section{Introduction}
Device-independent quantum key distribution (DIQKD) protocols
aim at generating a secret key between two parties in a
provably secure way without making assumptions about the internal
working of the quantum devices used in the protocol. In DIQKD, the
quantum apparatuses are seen as black boxes that produce classical
outputs, possibly depending on the value of some classical inputs (see Fig.~\ref{figdiqkd}).
These apparatuses are thought to implement a quantum process, but
no hypothesis in terms of Hilbert space, operators, or states are
made on the actual quantum process that generates the outputs
given the inputs.

DIQKD can be thought of by contrasting it with usual quantum key
distribution (QKD). In its entanglement-based version
\cite{BBM92}, traditional QKD involves two parties, Alice and Bob,
who receive entangled particles emitted from a common source and
who measure each of them in some chosen bases. The measurement
outcomes are kept secret and form the raw key. As the source of
particles is situated between Alice's and Bob's secure locations,
it is not trusted by the parties, but assumed to be under the
control of an eavesdropper Eve. The eavesdropper could for
instance have replaced the original source by one who produces
states that give her useful information about Alice's and Bob's
measurement outcomes. However, by performing measurements in
well-chosen bases on a random subset of their particles and by
comparing their results, Alice and Bob can estimate the quantum
states that they receive from the eavesdropper and decide whether
a secret key can be extracted from them.

In a device-independent analysis of this scenario, Alice and Bob
would not only distrust the source of particles, but they would
also distrust their measuring apparatuses. The measurement
directions may for instance drift with time due to imperfections
in the apparatuses, or the entire apparatuses may be untrusted
because they have been fabricated by a malicious party. Alice and
Bob have therefore no guarantee that the actual measurement bases
corresponds to the expected ones. In fact they cannot even make
assumptions about the dimension of the Hilbert space in which they
are defined. In DIQKD, Alice and Bob have thus to bound Eve's
information by looking for the worst combination of states and
measurements (in Hilbert spaces of arbitrary dimension) that are
compatible with the observed classical input-output relations. In
contrast, in usual QKD Alice and Bob have a perfect knowledge of
the measurements that are carried out and of the Hilbert
space dimension of the quantum state they measure, and they exploit
this information to bound the eavesdropper's information when they
look for the worst possible states compatible with their observed
data.

\begin{figure}[h]\begin{center}
\vspace{2em}
\includegraphics[scale=0.65]{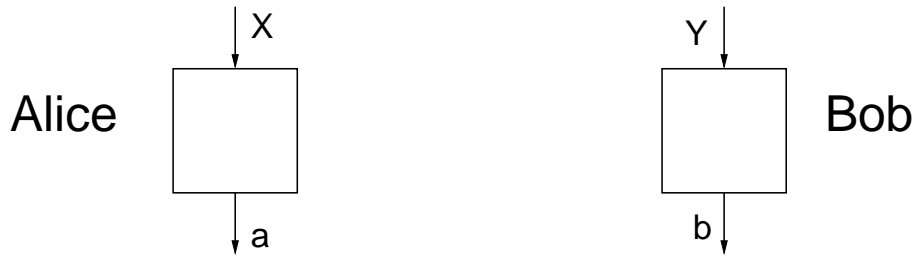}
\caption{Schematic representation of the DIQKD scenario. Alice and Bob see their quantum devices as black boxes producing classical outputs, $a$ and $b$, as a function of classical inputs $X$ and $Y$. From the observed statistics, and without making any assumption on the internal working of the devices, they should be able to conclude whether they can establish a secret key secure against a quantum eavesdropper.
}\label{figdiqkd}
\end{center}\end{figure}

\subsection{Why DIQKD?}
DIQKD represents a relaxation of the security assumptions made in
usual QKD. In this sense, it fits in the continuity of a series of
works that aim to design cryptographic protocols secure against
more and more powerful eavesdroppers.

From a fundamental point of view, DIQKD shows that the security of
a cryptographic scheme is possible based on a minimal set of
fundamental assumptions. It only requires that:
\begin{itemize}
\item Alice's and Bob's physical locations are secure, i.e., no unwanted information can leak out to the outside;
\item they have a trusted random number generator,
possibly quantum, producing a classical random output;
\item they have trusted classical devices (e.g., memories and computing devices) to store and process the classical data generated by their quantum
apparatuses;
\item they share an authenticated, but otherwise public, classical channel (this hypothesis can be ensured if Alice and Bob start off with a small shared secret key);
\item quantum physics is correct.
\end{itemize}
Other than these prerequisites, shared by all QKD protocols, no
others are necessary. In addition to these essential requirements,
usual QKD protocols assume that Alice and Bob have some knowledge
about their quantum devices.

From a practical point of view, DIQKD resolves some of the
drawbacks of usual QKD. Usual security proofs of QDK make several
assumptions about the quantum systems, such as their Hilbert space
dimension. These assumptions are often critical: as we show below,
the security of the BB84 protocol, for instance, is entirely
compromised if Alice and Bob share four-dimensional systems
instead of sharing qubits as usually assumed. The problem is that
real-life implementation of QKD protocols may differ from the
ideal design. For instance, the quantum apparatuses may be noisy
or there may be uncontrolled side channels. A possible, but
challenging, way to address these problems would be to
characterize very precisely the quantum devices and try to adapt
the security proof to the actual implementation of the protocol.
The concept of device-independent QKD, on the other hand, applies
through its remarkable generality in a simple way to these
situations as it allows us to ignore all implementation details.

DIQKD makes it also easier to test the components of a QKD
protocol. Since its security relies on the observed classical data
generated by the devices, errors or deterioration with time of the
internal working of the quantum devices, which could be exploited
by an eavesdropper, are easily monitored and accounted for in the
key rate.

A third practical motivation for DIQKD is that it covers the
adverse scenario where the quantum devices are not trusted. For
instance, someone who had access to the quantum apparatuses at
some time might have hacked or modified their mechanism. But if
the devices still produce proper classical input-output relations,
which is all what is required, this is irrelevant to the security
of the scheme. To some extent DIQKD overturns the adage that the
security of a cryptographic system is only as good as its physical
security. Of course an eavesdropper who had access to the quantum
devices might have modified their working so that they directly
send her information about the measurement settings and outcomes.
But this goes against the basic requirement that Alice's and Bob's
locations should be completely secure against Eve's scrutiny -- a
necessary requirement for cryptography to have any meaning. It is
modulo this assumption, that the eavesdropper is
free to tamper with their devices.

\subsection{Usual QKD protocols are not secure in the device-independent scenario}
A consequence of adopting a more general security model is that
traditional QKD protocols may no longer be secure, as illustrated
by the following example.

Consider the entanglement-based version of BB84~\cite{BB84}. Alice
has a measuring device that takes a classical input $X\in\{0,1\}$
(her choice of measurement setting) and that produces an output
$a\in\{0,1\}$ (the measurement outcome). Similarly, Bob's device
accepts inputs $Y\in\{0,1\}$ and produce outputs $b\in\{0,1\}$.
Both measuring devices act on a two-dimensional subspace of the
incoming particles (e.g., the polarization of a photon). The
setting ``0'' is associated to the measurement of $\sigma_x$,
while the setting ``1'' corresponds to $\sigma_z$. Suppose that in
an ideal, noise-free situation they observe the following
correlations:
\begin{eqnarray}\label{bb84}
&&P(ab|00)=P(ab|11)=1/2\quad\text{if } a=b\nonumber\\
&&P(ab|01)=P(ab|10)=1/4\quad \text{for all }a,b\,,
\end{eqnarray}
where $P(ab|XY)$ is the probability to observe the pair of
outcomes $a,b$ given that they have made measurements $X,Y$. That
is, if Alice and Bob perform measurements in the same bases, they
always get perfectly correlated outcomes; while if they measure in
different bases, they get completely uncorrelated random outcomes.
In term of the measurement operators $\sigma_x$ and $\sigma_z$ and
the two-qubit state $|\psi\rangle\in
\mathbb{C}^2\otimes\mathbb{C}^2$ that characterizes their incoming
particles, the above correlations can be rewritten as
\begin{eqnarray}\label{bb84state}
\langle \psi |\sigma_x\otimes\sigma_x|\psi\rangle = \langle \psi |\sigma_z\otimes\sigma_z|\psi\rangle =1\nonumber\\
\langle \psi |\sigma_x\otimes\sigma_z|\psi\rangle =\langle \psi |\sigma_z\otimes\sigma_x|\psi\rangle =0\,.
\end{eqnarray}
The only state compatible with this set of equations is the
maximally entangled state $(|00\rangle+|11\rangle)/\sqrt 2$. Alice
and Bob therefore rightly conclude that they can safely extract a
secret key from their measurement data.

In the device-independent scenario, however, Alice and Bob can no
longer assume that the measurement settings ``0'' and ``1''
correspond to the operators $\sigma_x$ and $\sigma_z$, nor that
they act on the two-qubit space $\mathbb{C}^2\otimes\mathbb{C}^2$.
It is then not difficult to find separable (hence insecure) states
that reproduce the measurement data (\ref{bb84}) for appropriate
choice of measurements \cite{Magniez,AGM}. An example is given by
the $\mathbb{C}^4\otimes\mathbb{C}^4$ state
\[\label{bb84state2}
\rho_{AB}=\frac{1}{4}\sum_{z_0,z_1=0}^1
\left(|z_0\,z_1\rangle\langle
z_0\,z_1|\right)_A\otimes\left(|z_0\,z_1\rangle\langle
z_0\,z_1|\right)_B ,
\]
where the vectors $\ket{0}$ and $\ket{1}$ define the $z$ basis, and by the measurements
$\sigma_z\otimes I$ for the setting ``0'' and $I\otimes \sigma_z$
for the setting ``1''. Clearly this combination of state and
measurements reproduce the correlations (\ref{bb84}): Alice and
Bob find completely correlated outcomes when the use the same
measurement settings, and completely uncorrelated ones otherwise.
In contrast to the previous situation, however, Eve can now have a
perfect copy of the local states of Alice and Bob, for instance if
they share the tripartite state
\[
\rho_{ABE}=\frac{1}{4}\sum_{x,z=0}^1 \left(|z_0\,z_1\rangle\langle
z_0\,z_1|\right)_A\otimes\left(|z_0\,z_1\rangle\langle
z_0\,z_1|\right)_B\otimes \left(|z_0\,z_1\rangle\langle
z_0\,z_1|\right)_E\,.
\]
This example illustrates the fact that in the usual security
analysis of BB84 it is crucial to assume that Alice and Bob
measurements act on a two-dimensional space, a condition difficult
to check experimentally. If we relax this assumption, the security
is no longer guaranteed.

\subsection{How can DIQKD possibly be secure?}
Understanding better why usual QKD protocols are not secure in the
device-independent scenario may help us identify physical
principles on which to base the security of a device-independent
scheme. A first observation is that the correlations (\ref{bb84})
produced in BB84 are classical: we don't need to invoke quantum
physics at all to reproduce them. They can simply be generated by
a set of classical random data shared by Alice's and Bob's systems
--- in essence this is what the separable state (\ref{bb84state})
achieves. Formally, they can be written in the form
\[\label{classcorr}
P(ab|XY) = \sum_{\lambda} P(\lambda)\, D(a|X,\lambda)\,D(b|Y,\lambda)
\]
where $\lambda$ is a classical variable with probability
distribution $P(\lambda)$ shared by Alice's and Bob's devices and
$D(a|X,\lambda)$ is a function that completely specifies Alice's
outputs once the input $X$ and $\lambda$ are given (and similarly
for $D(b|Y,\lambda)$ ). An eavesdropper might of course have a
copy of $\lambda$, which would give her full information about
Alice's and Bob's outputs once the inputs are announced.

This trivial strategy is not available to the eavesdropper,
however, if the outputs of Alice's and Bob's apparatuses are
correlated in a non-local way, in the sense that they violate a
Bell inequality \cite{Bell}. Indeed, non-local correlations are
defined precisely as those that cannot be written in the form
(\ref{classcorr}). The violation of a Bell inequality is thus a
necessary requirement for the security of QKD protocol in the
device-independent scenario. This condition is clearly not
satisfied by BB84.

More than a necessary condition for security, non-locality is the
physical principle on which all device-independent security proofs
are based. This follows from the fact that non-local correlations
require for their generation entangled states, whose measurement
statistics cannot be known completely to an eavesdropper. To put
it in another way, Bell inequalities are the only entanglement
witnesses that are device-independent, in the sense that they do
not depend on the physical details underlying Alice's and Bob's
apparatuses.

\subsection{Earlier works and relation to QKD against no-signalling eavesdroppers}
The intuition that the security of a QKD scheme could be based on
the violation of a Bell inequality was at the origin of Ekert's
1991 celebrated proposal \cite{Ekert}. The crucial role that
non-local correlations play in a device-independent scenario was
also implicitly recognized by Mayers and Yao \cite{Mayers}.
Quantitative progress, however, has been possible only recently
thanks to the pioneering work of Barrett, Hardy, and Kent
\cite{BHK}. Barrett, Hardy, and Kent proved the security of QKD
scheme against general attacks by a supra-quantum eavesdropper
that is limited by the no-signalling principle only (rather than
the full quantum formalism). This is possible because once the
no-signaling condition is assumed, nonlocal correlations satisfy a
monogamy condition analogous to that of entanglement in quantum
theory \cite{nlinfo}. Since quantum theory satisfies the
no-signalling condition, security against a no-signalling
eavesdropper implies security in the device-independent scenario.

Barrett, Hardy, and Kent's result is a proof of principle as their
protocol requires Alice and Bob to have a noise-free quantum
channel and generates a single shared secret bit (but makes a
large number of uses of the channel). A slight modification of
their protocol based on the results of \cite{BKP} enables the
generation of a secret key of $\log_2 d$ bit if Alice and Bob have
a channel that distributes $d$-dimensional systems. Barrett,
Hardy, and Kent's work was extended to noisy situations and
non-vanishing key rates in \cite{AGM,AMP,Scarani}, though these
works only considered security against individual attacks, where
the eavesdropper is restricted to act independently on each of
Alice's and Bob's systems. Masanes {\sl et al.} introduced a
security proof valid against arbitrary attacks by an eavesdropper
that is not able to store non-classical information
\cite{masanes-winter}. This result was improved by Masanes
\cite{masanes} who proved security in the universally-composable
sense, the strongest notion of security. Although the last two
results take into account eavesdropping strategies that act
collectively on systems corresponding to different uses of the
devices, they require the no-signalling condition to hold not only
between the devices on Alice's and on Bob's side, but also between
all individual uses of the quantum device of one party. This
condition can be enforced, although not in a practical manner, by
having the parties use in parallel $N$ devices that are space-like
separated from each other, rather than using sequentially a single
device $N$ times.

There are fundamental motivations to study the security of QKD
protocols against no-signalling eavesdroppers (NSQKD); this
improves for instance our understanding of the relationship
between information theory and physical theories. From a practical
point of view, it is also interesting to develop cryptographic
schemes that rely on physical principles independent from quantum
theory and thus that could be guaranteed secure even if quantum
theory were ever to fail.

However, given that for the moment we have no good reasons (apart possibly theoretical ones) to doubt the validity of quantum theory, nor evidences that a hypothetical breakdown of quantum theory would signify the immediate end of quantum key distribution\footnote{For instance, quantum physics might only breakdown at an energy scale that would remain unaccessible to human control for ages.}, it is advantageous to exploit the full quantum formalism in the device-independent context. First of all, as the entire quantum formalism is more constraining than the no-signalling principle alone, we expect to derive higher key rates and better noise resistance in the quantum case (for instance, the proof of general security given in \cite{masanes} has a key rate and a noise-resistance that is not practical when applied to quantum correlations). A second advantage is that, from a technical point of view, we can exploit in proving security all the theoretical framework of quantum theory -- as opposed to a single principle. We may, in particular, use existing results such as de Finetti theorems, efficient privacy amplification schemes against quantum adversaries, etc. (but might also have to derive new technical results that may find applications in other contexts).

\subsection{Content and structure of the paper}
Here we prove the security of a modified version of the Ekert
protocol \cite{Ekert}, proposed in Ref. \cite{AMP}. Our proof,
already introduced in \cite{aci07}, exploits the full quantum
formalism, but is restricted to collective attacks, where Eve is
assumed to act independently and identically at each use of the
devices, though she can act coherently at any time on her own
systems. In the usual security model, security against collective
attacks implies security against the most general type of attacks
\cite{coll}. It is an open question whether this is also true in
the device-independent scenario. In the protocol that we analyze,
Alice and Bob bound Eve's information by estimating the violation
of the Clauser-Horne-Shimony-Holt (CHSH) inequality \cite{chsh}.
Our main result is a tight bound on the Holevo information between
Alice and Eve as a function of the amount of violation of the CHSH
inequality. The protocol that we use, our security assumptions,
and our main result are presented in Section~2. In particular, we
present in Subsection~2.4 all the details of our security proof,
which was only sketched in~\cite{aci07}.

It is crucial for the security of DIQKD that Alice's and Bob's
outcomes genuinely violate a Bell inequality. All experimental
tests of non-locality that have been made so far, however, are
subject to at least one of several loopholes and therefore admit
in principle a local description. We discuss in Section~3 the
issue of loopholes in Bell experiments from the perspective of~
DIQKD.

Finally, we conclude with a discussion of our results and some
open questions in Section~4.

\section{Results}
\subsection{The protocol}
The protocol that we study is a modification of the Ekert 1991
protocol \cite{Ekert} proposed in Ref.~\cite{AMP}. Alice and Bob
share a quantum channel consisting of a source that emits pairs of
particles in an entangled state $\rho_{AB}$. Alice can choose to
apply to her particle one out of three possible measurements
$A_0$, $A_1$ and $A_2$, and Bob one out of two measurements $B_1$
and $B_2$. All measurements have binary outcomes labeled by
$a_i,b_j\in\{+1,-1\}$.

The raw key is extracted from the pair $\{A_0,B_1\}$. The quantum
bit error rate (QBER) is defined as $Q=P(a\neq b|01)$.
This parameter estimates the amount of correlations between
Alice's and Bob's symbols and thus quantifies the amount of
classical communication needed for error correction. The
measurements $A_{1}$, $A_2$, $B_1$, and $B_2$ are used on a subset
of the particles to estimate the CHSH polynomial
\begin{equation}\label{CHSHeq}
    {S}=\moy{a_1b_1}+\moy{a_1b_2}+ \moy{a_2b_1}- \moy{a_2b_2}\,,
\end{equation}
where the correlator $\moy{a_ib_j}$ is defined as
$P\,(a=b|ij)-P\,(a\neq b|ij)$. The CHSH polynomial is used by
Alice and Bob to bound Eve's information and, thus, governs the
privacy amplification process. We note that there is no a priori
relation between the value of ${S}$ and the value of $Q$: these
are two parameters which are available to estimate Eve's
information.

Without loss of generality, we suppose that the marginals are
random for each measurement, i.e., $\moy{a_i}=\moy{b_j}=0$ for all
$i$ and $j$. Were this not the case, Alice and Bob could achieve
it a posteriori through public one-way communication by agreeing
on flipping randomly a chosen half of their bits. This
operation would not change the value of $Q$ and $S$ and would be
known to Eve.

A particular implementation of our protocol with qubits is given
for instance by the noisy two-qubit state $\rho_{AB}= p
\ket{\Phi^+}\bra{\Phi^+}+(1-p)I/4$ and by the qubit measurements
$A_0=B_1=\sigma_z$, $B_2=\sigma_x$, $A_1=(\si_z+\si_x)/{\sqrt{2}}$
and $A_2=(\si_z-\si_x)/\sqrt{2}$, which maximize the CHSH
polynomial for the state $\rho_{AB}$. The state $\rho_{AB}$ corresponds to a
two-qubit Werner state and arises, for instance, from the state
$\ket{\Phi^+}=1/\sqrt{2}(\ket{00}+\ket{11})$ after going through a
depolarizing channel, or through a phase-covariant cloner. The resulting correlations
satisfy $S=2\sqrt{2} p$ and $Q=1/2-p/2$, i.e.,
$S=2\sqrt{2}(1-2{Q})$. Though these correlations can be generated
in the way that we just described, it is important to stress that
Alice and Bob do not need to assume that they perform the above
measurements, nor that their quantum systems are of dimension 2,
when they bound Eve's information.

In the case of classically correlated data (corresponding to
$p\leq 1/\sqrt{2}$ for the above correlations), the maximum of the
CHSH polynomial (\ref{CHSHeq}) is 2, which defines the well-known
CHSH Bell inequality ${S}\leq 2$. Secure DIQKD is not possible if
the observed value of ${S}$ is below this classical limit, since
in this case there exists a trivial attack for Eve that gives her
complete information, as discussed in Subsection~1.3. On the other
hand, at the point of maximal quantum violation ${S}=2\sqrt 2$
(corresponding to $p=1$ for the above correlations), Eve's
information is zero. This follows from the work of
Tsirelson~\cite{cir80}, who showed that any quantum realization of
this violation is equivalent to the case where Alice and Bob
measure a two-qubit maximally entangled state. The main ingredient
in the security proof of our DIQKD protocol is a lower bound on
Eve's information as a function of the CHSH value. This bound
allows us to interpolate between the two extreme cases of zero and
maximal quantum violation and yields provable security for
sufficiently large violations.

\subsection{Eavesdropping strategies}
\subsection*{Most general attacks}
In the device-independent scenario, Eve is assumed not only to
control the source (as in usual entanglement-based QKD), but also
to have fabricated Alice's and Bob's measuring devices. The only
data available to Alice and Bob to bound Eve's knowledge is the
observed relation between the inputs and outputs, without
any assumption on the type of quantum measurements and systems
used for their generation.

In complete generality, we may describe this situation as follows.
Alice, Bob, and Eve share a state $\ket{\Psi}_{ABE}$ in
$H_A^{\otimes n}\otimes H_B^{\otimes n} \otimes H_E$, where $n$ is
the number of bits of the raw key. The dimension $d$ of Alice's and
Bob's Hilbert spaces $H_A=H_B=\mathbb{C}^d$ is unknown to them and
fixed by Eve. The measurement $M_k$ yielding the $k^\mathrm{th}$
outcome of Alice is defined on the $k^\mathrm{th}$ subspace of
Alice and chosen by Eve. This measurement may depend on the
input $A_{j_k}$ chosen by Alice at step $k$ and on the
value $c_k$ of a classical register stored in the device, that is,
$M_k=M_k(A_{j_k},c_k)$. The classical memory $c_k$ can in
particular store information about all previous inputs and
outputs. Note that the quantum device may also have a quantum memory,
but this quantum memory at step $k$ of the protocol can be seen as
part of Alice's state defined in $H_A^{k}$. The value of this
quantum memory can be passed internally from step $k$ of the
protocol to step $k+1$ by teleporting it from $H_A^k$ to
$H_A^{k+1}$ using the classical memory $c_k$. The situation is
similar for Bob.

\subsubsection*{Collective attacks}
In this paper, we focus on collective attacks where Eve applies
the same attack to each system of Alice and Bob. Specifically, we
assume that the total state shared by the three parties has the
product form $\ket{\Psi_{ABE}}=\ket{\psi_{ABE}}^{\otimes n}$ and
that the measurements are a function of the current input
only, e.g., for Alice $M_k=M(A_{j_k})$. We thus assume that the
devices are memoryless and behave identically and independently at
each step of the protocol. From now on, we simply write the
measurement $M(A_j)$ as $A_j$.

For collective attacks, the asymptotic secret key rate $r$ in the limit of a key of infinite size under one-way
classical postprocessing from Bob to Alice is lower-bounded by the
Devetak-Winter rate~\cite{DW}, \be \label{rate} r\, \geq\,
r_{DW}\,= \,I(A_0:B_1)\,-\, \chi(B_1:E)\,, \ee which is the
difference between the mutual information between Alice and Bob,
\be I(A_0:B_1)=H(A_0)+H(B_1)-H(A_0,B_1) \ee and the Holevo
quantity between Eve and Bob \be\chi(B_1:E)=S(\rho_E)-
\demi\sum_{b_1=\pm 1}S(\rho_{E|b_1})\,. \ee Here $H$ and $S$
denote the standard Shannon and von Neumann entropies,
$\rho_E=\linebreak[4]\text{Tr}_{AB}\ket{\psi_{ABE}}\bra{\psi_{ABE}}$
denotes Eve's quantum state after tracing out Alice and Bob's
particles, and $\rho_{E|b_1}$ is Eve's quantum state when Bob has
obtained the result $b_1$ for the measurement $B_1$. The optimal collective attack corresponds to the case where the tripartite state $\ket{\psi_{ABE}}$ is the purification of the bipartite state $\rho_{AB}$ shared by Alice and Bob.

 Since we have
assumed uniform marginals, the mutual information between Alice
and Bob is given here by \be I(A_0:B_1)=1-h(Q)\,, \ee where $h$ is the
binary entropy.

Note that the rate is given by \eqref{rate} and not by
$I(A_0:B_1)-\chi(A_0:E)$ because $\chi(A_0:E)\geq \chi(B_1:E)$
holds for our protocol \cite{AMP}; it is therefore advantageous
for Alice and Bob to do the classical postprocessing with public
communication from Bob to Alice.

\subsection{Security of our protocol against collective attacks}
To find Eve's optimal collective attack, we have to find the
largest value of $\chi(B_1:E)$ compatible with the observed
parameters $Q$ and $S$ without assuming anything about the
physical systems and the measurements that are performed. Our main
result is the following.

\begin{thm}
Let $\ket{\psi_{ABE}}$ be a quantum state and
$\{A_1,A_2,B_1,B_2\}$ a set of measurements yielding a violation
$S$ of the CHSH inequality. Then after Alice and Bob have
symmetrized their marginals, \be \chi(B_1:E)\leq
h\left(\frac{1+\sqrt{({S}/2)^2-1}}{2}\right)\,. \label{bound} \ee
\end{thm}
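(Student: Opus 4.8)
The plan is to reduce the problem from arbitrary dimension to qubits, then solve the resulting two-qubit optimization explicitly. First I would invoke Jordan's lemma: since Alice's two observables $A_1$, $A_2$ used in the CHSH test are Hermitian with $\pm1$ eigenvalues, and likewise Bob's $B_1$, $B_2$, the pair $(A_1,A_2)$ can be simultaneously block-diagonalized into $2\times 2$ blocks, and similarly for $(B_1,B_2)$. Consequently the Hilbert space $H_A\otimes H_B$ decomposes as a direct sum $\bigoplus_\alpha \mathbb{C}^2\otimes\mathbb{C}^2$, and Eve — holding the purification — may be taken to hold the classical label $\alpha$ of which block one is in (this only increases her information). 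So it suffices to prove the bound for two qubits shared by Alice and Bob, with Eve holding the purification; by convexity of $\chi$ in the block-probabilities (and concavity of $h$ together with the fact that $S\mapsto h((1+\sqrt{(S/2)^2-1})/2)$ is concave in $S$ on $[2,2\sqrt2]$), the average over blocks is dominated by the single-block bound evaluated at the average $S$.

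Next, for the two-qubit case I would parametrize: by local unitaries on Alice's and Bob's qubits one can bring $A_1,A_2$ and $B_1,B_2$ to lie in the $x$–$z$ plane of the Bloch sphere, and bring $\rho_{AB}$ (after purification, the joint pure state $\ket{\psi_{ABE}}$ restricted to $AB$) into a convenient form. Since we only care about $\chi(B_1:E)$ after symmetrization of marginals, and the symmetrization can be implemented by Bob flipping half his bits, the relevant object is $S(\rho_E) - \tfrac12\sum_{b_1}S(\rho_{E|b_1})$, which for a pure tripartite state equals $S(\rho_{AB}) - \tfrac12\sum_{b_1} S(\rho_{AB|b_1})$ — i.e. we have eliminated Eve entirely and reduced to entropies of Alice–Bob states. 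After Bob's $B_1$ measurement and symmetrization, the conditional states $\rho_{AB|b_1}$ on Alice's qubit are related by a bit-flip, so $\tfrac12\sum S(\rho_{E|b_1})$ is the entropy of a single qubit whose Bloch vector length I would compute in terms of the state parameters.

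The core computation is then: maximize $\chi$ over all two-qubit states and $x$–$z$-plane measurements subject to the CHSH value being $S$. I would use the Horodecki criterion — for a two-qubit state with correlation matrix $T$, the maximal CHSH value is $2\sqrt{t_1^2+t_2^2}$ where $t_1\ge t_2$ are the two largest singular values of $T$ — so the constraint $S=2\sqrt{t_1^2+t_2^2}$ fixes $t_1^2+t_2^2=(S/2)^2$. One then shows the optimum is achieved by a Bell-diagonal state with $t_1=1$ (this is the step requiring care: arguing that extra mixedness or nonzero $t_2$ only helps Eve, which I expect follows from a monotonicity/majorization argument on the eigenvalues of $\rho_{AB}$), giving $t_2^2=(S/2)^2-1$; the eigenvalues of such a state are $\{(1\pm t_2)/2, 0, 0\}$ up to relabeling, so $S(\rho_{AB})=h((1+t_2)/2)=h((1+\sqrt{(S/2)^2-1})/2)$, while the symmetrized conditional term vanishes because Bob's measurement on a Bell-diagonal state with $t_1=1$ leaves Alice's reduced state maximally mixed, contributing $h(1/2)$... wait — more carefully, one checks the conditional-entropy term is exactly cancelled or subtracted to yield precisely the claimed right-hand side. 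The main obstacle, and the part I would spend the most effort on, is the optimality argument in this last step: proving rigorously that among all quantum configurations compatible with a given $S$, the Bell-diagonal "$t_1=1$" family maximizes $\chi(B_1:E)$, rather than merely checking that this family attains the claimed value.
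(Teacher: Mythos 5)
Your overall strategy coincides with the paper's: Jordan-type block-diagonalization of $(A_1,A_2)$ and $(B_1,B_2)$ to reduce to two qubits with Eve holding the block label, the Horodecki criterion to relate the state to $S$, identification of the Bell-diagonal family $\frac{1+C}{2}P_{\Phi^+}+\frac{1-C}{2}P_{\Phi^-}$ (i.e.\ $t_1=1$, $t_2=C=\sqrt{(S/2)^2-1}$) as the extremal attack, and a concavity/monotonicity argument to recombine the blocks. However, the proposal has a genuine gap exactly where you flag it: the optimality of that family is \emph{the} content of the theorem, and neither of the two devices you invoke to establish it is adequate as stated. First, local unitaries alone cannot bring an arbitrary two-qubit $\rho_{AB}$ to Bell-diagonal form while keeping the measurements in the $(x,z)$ plane; the paper's reduction to Bell-diagonal states (its Lemma 3) is a nontrivial step that uses the hypothesis that the marginals have been symmetrized --- the classical bit-flip is rewritten as the quantum map $\rho\mapsto(\sigma_y\otimes\sigma_y)\rho(\sigma_y\otimes\sigma_y)$, which kills half the coherences, and a further averaging with the complex conjugate $\bar\rho^{\,*}$ (which yields the same statistics and the same Holevo quantity) kills the rest. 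Without this, you are optimizing over a 15-parameter family rather than 3 eigenvalue differences. Second, once reduced to Bell-diagonal states with eigenvalues $\underline\lambda$, the bound is not a one-line majorization fact: the paper must (i) show Eve's best choice of $B_1$ is $\sigma_z$, giving $\chi_\lambda\le H(\underline\lambda)-h(\lambda_{\Phi^+}+\lambda_{\Phi^-})$, and then (ii) prove the entropic inequality $H(\underline\lambda)-h(\lambda_{\Phi^+}+\lambda_{\Phi^-})\le h\bigl(\tfrac{1}{2}+\tfrac{1}{2}\sqrt{2R^2-1}\bigr)$ for $R^2=(\lambda_{\Phi^+}-\lambda_{\Psi^-})^2+(\lambda_{\Phi^-}-\lambda_{\Psi^+})^2>1/2$ by an explicit two-variable calculus argument (parametrizing by $R,\theta,\delta$, locating the unique interior critical point, and showing that for $R>1/\sqrt2$ it lies outside the admissible domain so the maximum sits on the boundary where two eigenvalues vanish). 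Your "extra mixedness or nonzero $t_2$ only helps Eve" intuition is not even directionally obvious --- indeed for the optimal state $t_2$ is as \emph{large} as $S$ permits --- and the hesitation in your computation of the conditional term (for the extremal state $\rho_{E|b_1}$ is pure, so that term is $0$, not $h(1/2)$) signals that the verification of even the attained value was not completed. In short: correct skeleton, but the paper's Lemmas 3--6, which constitute the actual proof, are missing.
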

The proof of this Theorem will be given in
Subsection~\ref{sec:proof}. From this result, it immediately
follows that the key rate for given observed values of $Q$ and $S$
is \be\label{keyrate} r\geq
1-h(Q)-h\left(\frac{1+\sqrt{({S}/2)^2-1}}{2}\right)\,. \ee As an
illustration, we have plotted in Fig.~\ref{figcurves} the key rate
for the correlations introduced in Subsection~2.1 that satisfy
$S=2\sqrt{2}(1-2{Q})$ and which arise from the state
$\ket{\Phi^+}$ after going through a depolarizing channel.  We
stress that although with have specified a particular state and
particular qubit measurements that produce these correlations, we
do not assume anything about the implementation of the
correlations when computing the key rate. For the sake of
comparison, we have also plotted the key rate under the usual
assumptions of QKD for the same set of correlations. In this case,
Alice and Bob have a perfect control of their apparatuses, which
we have assumed to faithfully perform the qubit measurements given
in Subsection~2.1. The protocol is then equivalent to Ekert's,
which in turn is equivalent to the entanglement-based version of
BB84,
and one finds 
\be \chi(B_1:E)\leq h\left(Q+{S}/2\sqrt{2}\right)\,,
\label{boundstandard} \ee
as proven in Subsection~2.5.
If 
$S=2\sqrt{2}(1-2{Q})$, this expression yields the
well-known critical QBER of $11\%$ \cite{SP}, to be compared to
$7.1\%$ in the device-independent scenario (Fig.~\ref{figcurves}).
\begin{figure}\begin{center}
\includegraphics[scale=0.6]{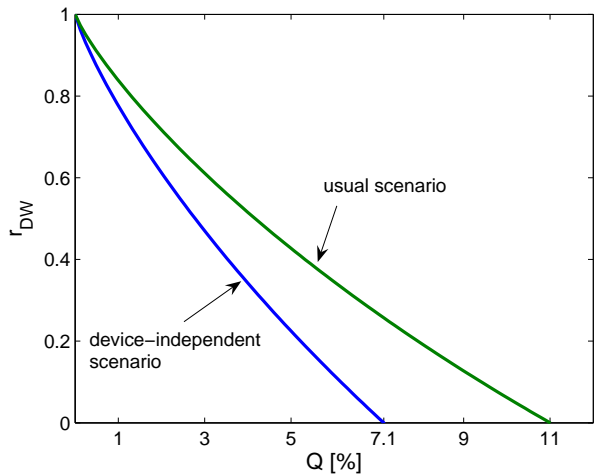}
\caption{Extractable secret-key rate against collective attacks in
the usual scenario [$\chi(B_1:E)$ given by
eq.~(\ref{boundstandard})] and in the device-independent scenario
[$\chi(B_1:E)$ given by eq.~(\ref{bound})], for correlations
satisfying $S=2\sqrt{2}(1-2{Q})$. The key rate is plotted as a
function of $Q$. Remember that the key rate for the BB84
protocol in the device-independent scenario is zero.
}\label{figcurves}
\end{center}\end{figure}

To illustrate further the difference between the
device-independent scenario and the usual scenario, we now give an
explicit attack which saturates our bound; this example also
clarifies why the bound (\ref{bound}) is independent of $Q$. To
produce correlations characterized by given values of $Q$ and $S$,
Eve sends to Alice and Bob the two-qubit Bell-diagonal state \be
\rho_{AB}({S})=\frac{1+{C}}{2}\,P_{\Phi^+}\,+\,
\frac{1-{C}}{2}\,P_{\Phi^-}\,, \label{rhoabc} \ee where
$P_{\Phi^\pm}$ are the projectors on the Bell states
$\ket{\Phi^\pm}=(\ket{00}\pm\ket{11})/\sqrt{2}$ and where
${C}=\sqrt{({S}/2)^2-1}$. She defines the measurements to be
$B_1=\si_z$, $B_2=\si_x$ and
$A_{1,2}=\frac{1}{\sqrt{1+{C}^2}}\si_z\pm\frac{{C}}{\sqrt{1+{C}^2}}\si_x$.
Any value of $Q$ can be obtained by choosing $A_0$ to be $\si_z$
with probability $1-2Q$ and to be a randomly chosen bit with
probability $2Q$. One can check that the Holevo information
$\chi(B_1:E)$ for the state (\ref{rhoabc}) and the measurement
$B_1=\si_z$ is equal to the righ-hand side of (\ref{bound}), i.e.,
this attack saturates our bound. This attack is impossible within
the usual assumptions because here not only the state $\rho_{AB}$,
but also the measurements taking place in Alice's apparatus depend
explicitly on the observed values of ${S}$ and $Q$. The state
(\ref{rhoabc}) has a nice interpretation: it is the two-qubit
state which gives the highest violation ${S}$ of the CHSH
inequality for a given value of the entanglement, measured by the
concurrence ${C}$ \cite{vw}. Therefore, for the optimal attack,
Eve uses the quantum state achieving the observed Bell violation
with the minimal amount of entanglement between Alice and Bob.
Since entanglement is a monogamous resource, this allows her to
maximize her correlations with the honest parties.

\subsection{Proof of upper bound on the Holevo quantity}\label{sec:proof}
The proof of the bound (\ref{bound}) was only sketched in
Ref.~\cite{aci07}. We present here all the details of that proof.
For clarity, we divide the proof in four steps.

\subsubsection{Step 1: Reduction to calculations on two qubits}

\begin{lemma} \label{lemmaone}
It is not restrictive to suppose that Eve sends to Alice and Bob a
mixture $\rho_{AB}=\sum_\lambda p_\lambda\,\rho_\lambda$ of
two-qubit states, together with a classical ancilla (known to her)
that carries the value $\lambda$ and determines which measurements
$A_{i}^\lambda$ and $B_{j}^\lambda$ are to be used on
$\rho_\lambda$.
\end{lemma}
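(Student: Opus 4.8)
The plan is to use Jordan's lemma on pairs of binary observables, together with a convexity argument over Eve's optimal attack. First I would recall the key structural fact: any two Hermitian operators $A_1,A_2$ with eigenvalues $\pm 1$ acting on a Hilbert space $H_A$ can be simultaneously block-diagonalized into blocks of dimension at most $2$. Concretely, there is a decomposition $H_A=\bigoplus_\alpha H_A^\alpha$ with $\dim H_A^\alpha\le 2$, invariant under both $A_1$ and $A_2$, so that within each block the two observables look like qubit observables (each a real linear combination of Pauli matrices, up to a choice of axes). The same applies to Bob's pair $B_1,B_2$ on $H_B$, giving blocks $H_B^\beta$. Since the honest parties' measurements $A_1,A_2,B_1,B_2$ are binary, this applies directly; the raw-key measurements $A_0,B_1$ enter only through $B_1$, which already sits in Bob's decomposition, and $A_0$ can be handled afterwards since the bound (\ref{bound}) will turn out not to involve $A_0$ at all.

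Next I would let $P_A^\alpha$ and $P_B^\beta$ be the projectors onto these blocks and define $\lambda=(\alpha,\beta)$, with $p_\lambda=\langle\psi_{ABE}|P_A^\alpha\otimes P_B^\beta\otimes\one_E|\psi_{ABE}\rangle$ and $\rho_\lambda$ the corresponding normalized post-projection two-qubit state on $H_A^\alpha\otimes H_B^\beta$ (padding blocks of dimension $1$ to dimension $2$ trivially). The measurements $A_i^\lambda:=P_A^\alpha A_i P_A^\alpha$ and $B_j^\lambda:=P_B^\beta B_j P_B^\beta$ are then genuine two-qubit $\pm1$-observables. Because the projectors $\{P_A^\alpha\otimes P_B^\beta\}$ are orthogonal and sum to identity, measuring which block one is in is a local operation that commutes with all the $A_i,B_j$; hence Eve may as well perform this measurement herself and record its outcome in a classical register. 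This is the step that turns the general state into the claimed mixture $\rho_{AB}=\sum_\lambda p_\lambda\rho_\lambda$ with a classical flag: performing a measurement that commutes with everything Alice and Bob do cannot decrease Eve's Holevo information $\chi(B_1:E)$ (it can only help her, since she gains a classical label she could always have chosen to ignore), and it does not change the observed statistics of $Q$ and $S$, which are diagonal in the block structure. Thus the worst case for Alice and Bob is attained, without loss of generality, by such a classically-flagged mixture of two-qubit states.

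The main obstacle I anticipate is the bookkeeping around the register and the operational claim that handing Eve the label $\lambda$ is never disadvantageous to her: one must argue carefully that $\chi(B_1:E)$ computed for the flagged state is an upper bound for the unflagged one, which follows because Eve holds the purification and can always adjoin the (classical, hence freely copyable) value of $\lambda$ to her system without disturbing $\rho_{AB}$ or the conditional states $\rho_{E|b_1}$ on the original support. A secondary subtlety is ensuring the dimension-$1$ Jordan blocks — where $A_1$ and $A_2$ commute and the correlations are effectively classical — are absorbed consistently into the two-qubit description; this is harmless since one can always embed a classical deterministic response into a qubit measurement. Once Lemma~\ref{lemmaone} is in place, the remaining steps reduce the problem to optimizing $\chi(B_1:E)$ over a single two-qubit state, which is where the concrete calculation of the bound (\ref{bound}) will take place.
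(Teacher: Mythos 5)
Your proposal is correct and follows essentially the same route as the paper: the Jordan-type block decomposition of the two $\pm 1$-valued observables into invariant subspaces of dimension at most $2$ (the paper's Lemma~\ref{lemmaoneB}), followed by the observation that Eve loses nothing by performing the block projection herself and keeping the resulting label $\lambda=(\alpha,\beta)$ as a classical flag. The only step you omit is the preliminary reduction from general POVMs to projective measurements via Naimark dilation (absorbing the ancillas into $\rho_{AB}$), which the paper dispatches in one sentence and which is needed before Jordan's lemma can be invoked.
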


The proof of this first statement relies critically on the
simplicity of the CHSH inequality (two binary settings on each
side). We present the argument for Alice, the same holds for Bob.
First, since any generalized measurement (POVM) can be viewed as a von Neumann measurement in a larger Hilbert space \cite{NielsenChuang}, we may assume that the two measurements $A_{1},A_{2}$ of Alice are von Neumann measurements, if necessary by including
ancillas in the state $\rho_{AB}$ shared by Alice and Bob. Thus
$A_1$ and $A_2$ are hermitian operators on $\mathbb{C}^d$ with
eigenvalues $\pm 1$. We can then use the following lemma.

\begin{lemma} \label{lemmaoneB}
Let $A_1$ and $A_2$ be Hermitian operators with eigenvalues equal
to $\pm 1$ acting on a Hilbert space $H$ of finite or countable
infinite dimension. Then we can decompose the Hilbert space $H$ as
a direct sum \be H= \oplus_\alpha H_\alpha^2 \ee such that
$\mathrm{dim}(H_\alpha^2)\leq 2$ for all $\alpha$, and such that
both $A_1$ and $A_2$ act within $H_\alpha^2$, that is, if
$|\psi\rangle \in H_\alpha^2$, then $A_1|\psi\rangle \in
H_\alpha^2$ and  $A_2|\psi\rangle \in H_\alpha^2$.
\end{lemma}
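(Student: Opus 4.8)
The plan is to reduce the statement to a structural fact about a single unitary operator and then block-diagonalise. The hypothesis on the spectra means $A_1$ and $A_2$ are \emph{involutions}, $A_1^2=A_2^2=\one$, so $T:=A_1A_2$ is unitary with $T^{-1}=T^\dagger=A_2A_1$ and satisfies the flip relations $A_1TA_1=A_2TA_2=T^{-1}$; equivalently, the self-adjoint operator $M:=A_1A_2+A_2A_1=T+T^{-1}$ commutes with both $A_1$ and $A_2$ (a two-line check using $A_i^2=\one$). Since a subspace is $A_1$- and $A_2$-invariant exactly when it is $A_1$- and $T$-invariant, it suffices to decompose $H$ into $T$-invariant and $A_1$-invariant subspaces of dimension at most two.

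Step two is the finite-dimensional case (more generally, the case of pure point spectrum), which carries all the real content. Write $V_\theta$ for the eigenspace of $T$ with eigenvalue $e^{i\theta}$. From $A_1TA_1=T^{-1}$ one gets $TA_1=A_1T^{-1}$, hence $A_1$ maps $V_\theta$ isometrically onto $V_{-\theta}$, and $H$ splits orthogonally as $V_0\oplus V_\pi\oplus\bigoplus_{0<\theta<\pi}(V_\theta\oplus V_{-\theta})$. On $V_0$ one has $A_1A_2=\one$, hence $A_2=A_1$, and on $V_\pi$ one has $A_2=-A_1$; in both cases $A_1$ and $A_2$ preserve every line inside each $\pm1$-eigenspace of $A_1$, giving one-dimensional blocks. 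On a pair $V_\theta\oplus V_{-\theta}$ with $0<\theta<\pi$, take any unit vector $v\in V_\theta$: then $A_1v\in V_{-\theta}$, $A_2v=A_1(A_1A_2)v=A_1Tv=e^{i\theta}A_1v$, and $A_2A_1v=T^{-1}v=e^{-i\theta}v$, so $\mathrm{span}\{v,A_1v\}$ is a genuinely two-dimensional subspace invariant under both $A_1$ and $A_2$; letting $v$ run through an orthonormal basis of $V_\theta$ yields mutually orthogonal two-dimensional blocks filling $V_\theta\oplus V_{-\theta}$ (orthogonality because $V_\theta\perp V_{-\theta}$, valid since $\theta\neq0,\pi$).

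Step three handles a general separable $H$, where $T$ may have continuous spectrum, and this is the step I expect to be the main obstacle. I would replace eigenspaces by the projection-valued spectral measure $E$ of $T$ on the unit circle; the flip relation upgrades to $A_1E(B)A_1=E(\bar B)$ for Borel sets $B$, with $\bar B$ the reflection of $B$ across the real axis. This still splits off the degenerate parts $E(\{1\})H$ and $E(\{-1\})H$, on which $A_2=\pm A_1$, and shows that $A_1$ interchanges $E(U)H$ and $E(L)H$ for $U,L$ the open upper and lower half-circles. To cut the remaining part into two-dimensional blocks one passes to the direct-integral decomposition of $H$ with respect to $T$, in which $M=T+T^{-1}$ acts as a scalar on each fibre so that the computation of step two applies fibrewise, and then reassembles the blocks through a measurable choice of bases. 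Alternatively, one may simply invoke Halmos's structure theorem for two subspaces of a Hilbert space applied to the projections $P_i=(\one+A_i)/2$, which is precisely this lemma. All the genuine mathematics is in step two; the only difficulty is the measure-theoretic bookkeeping required to make the fibrewise argument rigorous, and since the sequel only uses the resulting $2\times2$ block form of $A_1$ and $A_2$, appealing to Halmos is a legitimate shortcut.
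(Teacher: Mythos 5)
Your proof is correct and its engine is the same as the paper's: both arguments use $A_1^2=A_2^2=\one$ to form the unitary product of the two involutions ($T=A_1A_2$ for you, $A_2A_1$ in the paper), observe that conjugation by either involution inverts it, and pair each eigenvector for $e^{i\theta}$ with its image under an involution, which lies in the eigenspace for $e^{-i\theta}$; the two-dimensional spans so obtained are the blocks $H_\alpha^2$. Where you differ is in rigor, and the difference is substantive. The paper dismisses completeness with the single sentence ``as $A_2A_1$ is unitary, its eigenvectors span the entire Hilbert space,'' which is guaranteed only in finite dimension, and it never checks that the chosen blocks are mutually orthogonal and exhaust $H$. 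Your Step 2 supplies exactly this bookkeeping (isolating the eigenvalues $\pm1$, where $A_2=\pm A_1$ and the blocks are one-dimensional, and verifying that running $v$ over an orthonormal basis of $V_\theta$ gives orthogonal blocks filling $V_\theta\oplus V_{-\theta}$), and your Step 3 correctly identifies the real issue in the countably infinite case: a product of two Hermitian involutions can have purely continuous spectrum (take the Halmos two-projection normal form with an ``angle operator'' such as multiplication by $\cos x$ on $L^2[0,1]$), and since any finite-dimensional subspace invariant under $A_1$ and $A_2$ is $T$-invariant and hence contains an eigenvector of $T$, no direct-sum decomposition into blocks of dimension $\leq 2$ exists in that case. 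So the lemma as literally stated is an over-claim for infinite dimension, and only a direct-integral version (or the Halmos structure theorem you cite) survives; this is harmless for the paper, which works in a finite though unknown dimension $d$ and only ever uses the resulting $2\times 2$ block form, but your caution here is not mere bookkeeping --- it flags a genuine gap in the paper's own one-line argument.
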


\begin{proof}
Previous proofs of this result have been obtained
independently by Tsirelson \cite{tsirelson} and
Masanes~\cite{masanes2}. Here, we provide an alternative and
possibly simpler proof.

Note that since the eigenvectors of $A_1$
and $A_2$ are $\pm 1$, these operators square to the identity:
$A_1^2 = A_2^2 = \one$. Therefore $A_2 A_1$ is a unitary operator.
Let $|\alpha\rangle$ be an eigenvector of $A_2 A_1$: \be A_2 A_1
|\alpha\rangle = \omega |\alpha\rangle \quad \mbox{with}\quad
|\omega|=1. \label{A2A1}\ee Then
$\ket{\tilde\alpha}=A_2\ket{\alpha}$ is also an eigenvector of
$A_2 A_1$ with eigenvalue $\overline\omega$, since $A_2
A_1\ket{\tilde\alpha}=A_2 A_1 A_2\ket{\alpha}\linebreak[1]=A_2
(A_2 A_1)^\dagger\ket{\alpha}=\overline \omega
A_2\ket{\psi}=\overline\omega\ket{\tilde\alpha}$. As $A_2 A_1$ is
unitary, its eigenvectors span the entire Hilbert space $H$. It
follows that $H$ can be decomposed as the direct sum
$H=\oplus_\alpha H_\alpha^2$, where
$H_c^2=\text{span}\{\ket{\alpha},\ket{\tilde\alpha}\}$ is (at
most) two-dimensional.

It remains to show that $A_1$ and $A_2$ act within $H_\alpha^2$.
By definition $A_2\ket{\alpha}=\ket{\tilde\alpha}$ and
$A_2\ket{\tilde\alpha}=\ket{\alpha}$. On the other hand,
$A_1\ket{\alpha}=A_1A_2\ket{\tilde\alpha}=\omega\ket{\tilde\alpha}$
and
$A_1\ket{\tilde\alpha}=A_1A_2\ket{\alpha}=\overline\omega\ket{\alpha}$.
Note that in the case where $\omega=\pm 1$, $A_1=\pm A_2$ on
$H_\alpha^2$, that is $A_1$ and $A_2$ are identical operators up
to a phase.
\end{proof}

\begin{proof}[Proof of Lemma \ref{lemmaone}]
We can rephrase Lemma \ref{lemmaoneB} as saying that
$A_j=\sum_\alpha P_\alpha A_j P_\alpha$ where the $P_\alpha$s are
orthogonal projectors of rank $1$ or $2$. From Alice's standpoint,
the measurement of $A_i$ thus amounts at projecting in one of the
(at most) two-dimensional subspaces defined by the projectors
$P_\alpha$, followed by a measurement of the reduced observable
$P_\alpha A_i P_\alpha=\vec{a}\,^\alpha_i\cdot\vec{\sigma}$.
Clearly, it cannot be worse for Eve to perform the projection
herself before sending the state to Alice and learn the value of
$\alpha$. The same holds for Bob. We conclude that without loss of
generality, in each run of the experiment Alice and Bob receive a
two-qubit state. The deviation from usual proofs of security of
QKD lies in the fact that the measurements to be applied can
depend explicitly on the state sent by Eve.
\end{proof}

\subsubsection{Step 2: Reduction to Bell-diagonal states of two qubits}
Let $\ket{\Phi^{\pm}}=1/\sqrt{2}\left(\ket{00}\pm\ket{11}\right)$
and $\ket{\Psi^{\pm}}=1/\sqrt{2}\left(\ket{01}\pm\ket{10}\right)$
be the four Bell states.
\begin{lemma} \label{lemmatwo}In the basis of Bell states ordered as $\{\ket{\Phi^+},\ket{\Psi^-}, \ket{\Phi^-},\ket{\Psi^+}\}$, each state $\rho_\lambda$ can be taken to
be a Bell-diagonal state of the form
\ba\rho_\lambda\left(\begin{array}{cccc}
\lambda_{\Phi^+}\\
& \lambda_{\Psi^-}\\
&& \lambda_{\Phi^-}\\
&&& \lambda_{\Psi^+}
\end{array}\right)\,,\label{belldiaglemma}\ea
with eigenvalues satisfying
\ba\lambda_{\Phi^+}\geq \lambda_{\Psi^-} &,&
 \lambda_{\Phi^-}\geq\lambda_{\Psi^+}\,. \label{orderlambdalemma}\ea
Furthermore, the measurements $A_i^\lambda$ and $B_j^\lambda$ can
be taken to be measurements in the $(x,z)$ plane.
\end{lemma}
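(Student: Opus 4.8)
Fix one value of $\lambda$ (since the statement is per-$\lambda$, and Eve reveals $\lambda$ anyway) and drop the label; by Lemma~\ref{lemmaone} we have a two-qubit state $\rho$ and $\pm1$-valued qubit observables $A_1,A_2,B_1,B_2$, and we pass to the optimal attack, in which Eve holds a purification $\ket{\psi_{ABE}}$ of $\rho$. Two invariance facts drive everything. First, both the CHSH value $S$ and the Holevo quantity $\chi(B_1:E)$ are unchanged if one replaces $(\rho,A_i,B_j)$ by $\big((U_A\otimes U_B)\rho(U_A\otimes U_B)^\dagger,\ U_AA_iU_A^\dagger,\ U_BB_jU_B^\dagger\big)$ for local unitaries $U_A,U_B$; for $\chi(B_1:E)$ this is because conjugating the purification by $U_A\otimes U_B\otimes\one_E$ leaves $\rho_E$ and each $\rho_{E|b_1}$ unchanged up to a unitary on $E$, and $\chi$ is unaffected by relabelling Bob's outcomes. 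The same holds under entrywise complex conjugation of $\rho$ in any basis in which $B_1$ is real. Second — and this is what licenses the symmetrisation below — $\chi(B_1:E)$ is a \emph{concave} function of $\rho_{AB}$ on the convex set of states with uniform $B_1$-marginal: writing $\mathcal P$ for the pinching of $\rho_{AB}$ in Bob's $B_1$-eigenbasis and using the purification to turn the conditional entropies of $E$ into entropies of $AB$, one gets $\chi(B_1:E)=S(\rho_{AB})-S(\mathcal P(\rho_{AB}))+1=1-D\big(\rho_{AB}\,\big\|\,\mathcal P(\rho_{AB})\big)$, and $\rho\mapsto D(\rho\|\mathcal P(\rho))$ is convex because relative entropy is jointly convex and $\rho\mapsto(\rho,\mathcal P(\rho))$ is linear.

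Now the reduction. Since $A_1,A_2$ are two $\pm1$ qubit observables, their Bloch vectors span (at most) a plane, so a local unitary on Alice brings them into the $(x,z)$ plane; likewise a local unitary on Bob brings $B_1,B_2$ into the $(x,z)$ plane, and with the remaining freedom one normalises the geometry further (e.g.\ $B_1=\sigma_z$ and the $(x,z)$-block of the correlation tensor of $\rho$ diagonalised) — all without changing $S$ or $\chi(B_1:E)$, by the first invariance. Once all four observables lie in the $(x,z)$ plane, $S$ depends on $\rho$ only through the $(x,z)$-block of its correlation tensor, which has now been arranged to be diagonal. Replace $\rho$ by its average $\rho'=\tfrac14\sum_{g}g\,\rho\,g$ over the four local unitaries $g\in\{\one\otimes\one,\ \sigma_x\otimes\sigma_x,\ \sigma_y\otimes\sigma_y,\ \sigma_z\otimes\sigma_z\}$ (and, if needed, also over complex conjugation). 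A direct computation shows $\rho'$ has maximally mixed marginals and diagonal correlation tensor, hence is Bell-diagonal, with the same diagonal correlation-tensor entries as $\rho$; since those entries are all that $S$ sees, $S$ is unchanged. Each $g$ has Bob-part in $\{\one,\sigma_x,\sigma_y,\sigma_z\}$, which commutes or anticommutes with $B_1=\sigma_z$, so $\chi(B_1:E)[g\rho g]=\chi(B_1:E)[\rho]$ by the first invariance; by concavity, $\chi(B_1:E)[\rho']\ge\tfrac14\sum_g\chi(B_1:E)[g\rho g]=\chi(B_1:E)[\rho]$. Thus the symmetrised configuration is Bell-diagonal, with the measurements still in the $(x,z)$ plane, the same CHSH value, and Eve's information not smaller.

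Finally, to impose $\lambda_{\Phi^+}\ge\lambda_{\Psi^-}$ and $\lambda_{\Phi^-}\ge\lambda_{\Psi^+}$, one applies local unitaries that preserve the $(x,z)$ plane and act on the Bell basis by the transposition $\Phi^-\!\leftrightarrow\!\Psi^+$ alone, resp.\ $\Phi^+\!\leftrightarrow\!\Psi^-$ alone (for instance $H\otimes H$ realises the first of these); this only relabels the Bell-diagonal eigenvalues and affects neither $S$ nor $\chi(B_1:E)$.

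The step I expect to do the real work is the simultaneous normalisation in the second paragraph: one must use the rather limited residual local-unitary freedom to put all four observables in a common plane, keep $B_1$ along a coordinate axis (so the $g$'s above still preserve the $B_1$-measurement, which is what makes concavity bite), \emph{and} diagonalise the $(x,z)$-block of the correlation tensor (so that the twirl, which necessarily kills off-diagonal correlators, does not change the CHSH value). One has to check these can always be met together — plausibly by first using the complex-conjugation symmetrisation to reduce to a real state (which automatically has zero $y$-rows and $y$-columns in its correlation tensor), with separate attention to degenerate cases ($A_1\propto A_2$, $B_1\propto B_2$, or $\rho$ already classical/product). The concavity identity of the first paragraph, though its proof is short, is the other point not to take for granted, since it is exactly what permits the averaging over the four local unitaries to be carried out on the single state $\rho$ rather than at the cost of enlarging the classical register $\lambda$.
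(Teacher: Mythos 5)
Your overall architecture (local-unitary normalisation followed by a Pauli twirl made harmless for Eve via concavity of $\chi(B_1:E)$) is appealing, and the identity $\chi(B_1:E)=1-D\bigl(\rho_{AB}\,\big\|\,\mathcal P(\rho_{AB})\bigr)$, with concavity following from joint convexity of the relative entropy, is correct and elegant. But the step you yourself single out as ``doing the real work'' is not a loose end to be checked --- it is where the argument breaks. Your four-element twirl over $\{\one\otimes\one,\sigma_x\otimes\sigma_x,\sigma_y\otimes\sigma_y,\sigma_z\otimes\sigma_z\}$ needs two things \emph{simultaneously}: (i) $B_1$ must lie along a Pauli axis, since for $B_1=\cos\varphi\,\sigma_z+\sin\varphi\,\sigma_x$ with $\sin\varphi\cos\varphi\neq 0$ conjugation by $\sigma_x$ or $\sigma_z$ sends $B_1$ to an observable that is not $\pm B_1$, so $\chi(B_1:E)[g\rho g]\neq\chi(B_1:E)[\rho]$ and the concavity step collapses; and (ii) the $(x,z)$ block of the correlation tensor must already be diagonal, since the twirl annihilates $T_{xz}$ and $T_{zx}$ and would otherwise strictly lower the CHSH value attainable with $(x,z)$-plane measurements (a maximally entangled state whose $(x,z)$ block is antidiagonal in the chosen frame is an extreme example). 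These two normalisations compete for the same freedom: the singular-value decomposition that diagonalises the $2\times 2$ block essentially fixes the rotation on Bob's side, and generically it does not place $B_1$ on an axis. Counting parameters, you have two rotation angles (one per party) against three conditions, so the requirements are generically incompatible; reducing first to a real state does not help, since reality only kills the $y$ row and column of $T$, not $T_{xz}$ and $T_{zx}$.

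The paper escapes exactly this tension by symmetrising in two stages. It first twirls only by $\{\one\otimes\one,\sigma_y\otimes\sigma_y\}$: conjugation by $\sigma_y$ sends every $(x,z)$-plane observable to its negative, so this twirl is compatible with \emph{arbitrary} measurement directions in the plane and preserves the entire $(x,z)$ correlation block (hence $S$); moreover it is justified operationally, being precisely the marginal symmetrisation already contained in the protocol, carried out by Eve. This yields only a block-diagonal state in the Bell basis. The residual $2\times 2$ off-diagonal elements are then removed by an $R_y\otimes R_y$ rotation (which keeps the measurements in the plane without needing to pin $B_1$ to an axis) that makes them purely imaginary, followed by averaging with the complex conjugate, which preserves all statistics because $(x,z)$-plane observables are real. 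If you want to keep your route, replace the single four-element twirl by this two-stage procedure; your concavity lemma can then legitimately justify both averaging steps. A secondary point to tighten: your inequality $\chi[\rho']\geq\chi[\rho]$ compares against the raw Holevo quantity of $\rho$, whereas the Theorem bounds Eve's information about the \emph{symmetrised} key bit, which equals $1-D(\rho\|\mathcal P\rho)\geq\chi[\rho]$ when the $B_1$ marginal of $\rho$ is not uniform (and your concavity statement, restricted to uniform-marginal states, does not apply to $\rho$ itself). Convexity of $D(\cdot\|\mathcal P(\cdot))$ does give $\chi[\rho']\geq 1-D(\rho\|\mathcal P\rho)$ directly, so this is repairable, but as written you are bounding the wrong quantity.
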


\begin{proof}
For fixed $\lambda$ (we now omit the index $\lambda$), we can
label the axis of the Bloch sphere on Alice's side in such a way
that $\vec{a}_1$ and $\vec{a}_2$ define the $(x,z)$ plane; and
similarly on Bob's side.

Eve is a priori distributing any two-qubit state $\rho$ of which
she holds a purification. Now, recall that we have supposed,
without loss of generality, that all the marginals are uniformly
random. Knowing that Alice and Bob are going to symmetrize their
marginals, Eve does not loose anything in providing them a state
with the suitable symmetry. The reason is as follows. First note
that since the (classical) randomization protocol that ensures
$\moy{a_i}=\moy{b_j}=0$ is done by Alice and Bob through public
communication, we can as well assume that it is Eve who does it,
i.e., she flips the value of each outcome bit with probability one
half. But because the measurements of Alice and Bob are in the
$(x,z)$ plane, we can equivalently, i.e., without changing Eve's
information, view the classical flipping of the outcomes as the
quantum operation $\rho\rightarrow (\sigma_y\otimes\sigma_y)
\rho(\sigma_y\otimes\sigma_y)$ on the state $\rho$. We conclude
that it is not restrictive to assume that Eve is in fact sending
the mixture
\ba\bar{\rho}=\demi\left[\rho+(\sigma_y\otimes\sigma_y)
\rho(\sigma_y\otimes\sigma_y)\right]\,,\ea i.e., that she is
sending a state invariant under $\si_y\otimes\si_y$.

Now, $\ket{\Phi^+}$ and $\ket{\Psi^-}$ are eigenstates of
$\sigma_y\otimes\sigma_y$ for the eigenvalue $-1$, whereas
$\ket{\Phi^-}$ and $\ket{\Psi^+}$ are eigenstates of
$\sigma_y\otimes\sigma_y$ for the eigenvalue $+1$. Consequently,
$\bar{\rho}$ is obtained from $\rho$ by erasing all the coherences
between states with different eigenvalues. Explicitly, in the
basis of Bell states, ordered as $\{\ket{\Phi^+},\ket{\Psi^-},
\ket{\Phi^-},\ket{\Psi^+}\}$ we have \ba
\bar{\rho}&=&\left(\begin{array}{cccc}
\lambda_{\Phi^+} & r_1e^{i\phi_1}\\
r_1e^{-i\phi_1}& \lambda_{\Psi^-}\\
&& \lambda_{\Phi^-} & r_2e^{i\phi_2}\\
&& r_2e^{-i\phi_2}& \lambda_{\Psi^+}
\end{array}\right) \ea where all the non-zero elements coincide with those of the original $\rho$.

We now use some additional freedom that is left in the labeling:
we can select any two orthogonal axes in the $(x,z)$ plane to be
labeled $x$ and $z$, and we can also choose their orientation. We
make use of this freedom to bring $\bar{\rho}$ to the form \ba
\bar{\rho}&=&\left(\begin{array}{cccc}
\lambda_{\Phi^+} & ir_1\\
-ir_1& \lambda_{\Psi^-}\\
&& \lambda_{\Phi^-} & ir_2\\
&& -ir_2& \lambda_{\Psi^+}
\end{array}\right)\,,
\ea with $r_1$ and $r_2$ real and with the diagonal elements
arranged as \ba\lambda_{\Phi^+}\geq \lambda_{\Psi^-} &,&
 \lambda_{\Phi^-}\geq\lambda_{\Psi^+}\,. \label{orderlambda}\ea
 Indeed, let $R_y(\theta)=cos\frac{\theta}{2}\one + i
\sin\frac{\theta}{2}\si_y$: by applying $R_y(\alpha)\otimes
R_y(\beta)$  with \ba
\tan(\alpha-\beta)=\frac{2r_1\cos\phi_1}{\lambda_{\Phi^+}-\lambda_{\Psi^
-}}&,&
\tan(\alpha+\beta)=-\frac{2r_2\cos\phi_2}{\lambda_{\Phi^-}-\lambda_{\Psi^
+}}\,, \ea the off-diagonal elements become purely imaginary. In
order to further arrange the diagonal elements according to
(\ref{orderlambda}), one can make the following extra rotations:
\begin{itemize}
\item in order to relabel $\Phi^+\leftrightarrow \Psi^-$ without changing the others, one sets $\alpha-\beta=\pi$ and $\alpha+\beta=0$ i.e. $\alpha=-\beta=\frac{\pi}{2}$;
\item in order to relabel $\Phi^-\leftrightarrow \Psi^+$ without changing the others, one sets $\alpha-\beta=0$ and $\alpha+\beta=\pi$ i.e. $\alpha=\beta=\frac{\pi}{2}$;
\item in order to relabel both, one takes the sum of the previous ones i.e. $\alpha=\pi$ and $\beta=0$.
\end{itemize}
In this way one fixes $\lambda_{\Phi^+}\geq \lambda_{\Psi^-}$ and
$\lambda_{\Phi^-}\geq \lambda_{\Psi^+}$, i.e. the order of the
diagonal elements in each sector.


Finally, we repeat an argument similar to the one given above:
since $\bar{\rho}$ and its conjugate $\bar{\rho}^{\,*}$ produce
the same statistics for Alice and Bob's measurements and provide
Eve with the same information, we can suppose without loss of
generality that Alice and Bob rather receive the Bell-diagonal
mixture \ba \rho_\lambda&=&\demi
\left(\bar{\rho}+\bar{\rho}^{\,*}\right)\,=\,
\left(\begin{array}{cccc}
\lambda_{\Phi^+}\\
& \lambda_{\Psi^-}\\
&& \lambda_{\Phi^-}\\
&&& \lambda_{\Psi^+}
\end{array}\right)\,,\label{belldiag}\ea with the eigenvalues satisfying (\ref{orderlambda}).
\end{proof}

\subsubsection{Step 3: Explicit calculation of the bound}
\begin{lemma} \label{lemmathree}
For a Bell-diagonal state $\rho_{\lambda}$ (\ref{belldiaglemma})
with eigenvalues ${\lambda}$ ordered as in eq.
(\ref{orderlambdalemma}) and for measurements in the $(x,z)$
plane, \be \chi_{{\lambda}}(B_1:E)\leq
h\left(\frac{1+\sqrt{({S}_{{\lambda}}/2)^2-1}}{2}\right)\,,
\label{estimate2} \ee where
$S_{{\lambda}}$ 
 is the largest violation of the CHSH
inequality by the state $\rho_{\lambda}$.
\end{lemma}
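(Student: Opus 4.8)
The plan is to compute $\chi_\lambda(B_1:E)$ exactly and then reduce the bound to a one-variable entropy estimate. Since the optimal collective attack on $\rho_\lambda$ is its purification, I would write $\ket{\psi_{ABE}}=\sum_k\sqrt{\lambda_k}\,\ket{B_k}_{AB}\ket{e_k}_E$ with the $\ket{B_k}$ the Bell states and $\{\ket{e_k}\}$ orthonormal; then $\rho_E=\sum_k\lambda_k\ket{e_k}\bra{e_k}$, so $S(\rho_E)=H(\{\lambda_k\})=S(\rho_{AB})$. The crucial observation for the conditional states is that after Bob measures $B_1$ and records an outcome $b_1$ his qubit is disentangled and the residual state of $AE$ is pure, whence $S(\rho_{E|b_1})=S(\rho_{A|b_1})$, a single‑qubit entropy. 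Writing the Bell-diagonal state as $\rho_{AB}=\frac14\big(\one+\sum_i c_i\,\sigma_i\otimes\sigma_i\big)$, with $c_x=\moy{\sigma_x\otimes\sigma_x}$ and $c_z=\moy{\sigma_z\otimes\sigma_z}$ the correlators along the two axes spanning the measurement plane, and $B_1$ along $\hat n=(\sin\theta,0,\cos\theta)$, a short computation gives $\rho_{A|b_1}=\frac12\big(\one+b_1(c_x n_x\sigma_x+c_z n_z\sigma_z)\big)$, whose Bloch length $\sqrt{c_x^2\sin^2\theta+c_z^2\cos^2\theta}$ does not depend on $b_1$. Hence
\[
\chi_\lambda(B_1:E)=H(\{\lambda_k\})-h\!\left(\frac{1+\sqrt{c_x^2\sin^2\theta+c_z^2\cos^2\theta}}{2}\right).
\]

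Next I would remove the measurement angle and bound $H(\{\lambda_k\})$. The eigenvalue ordering of Lemma~\ref{lemmatwo} gives $c_z+c_x=2(\lambda_{\Phi^+}-\lambda_{\Psi^-})\ge0$ and $c_z-c_x=2(\lambda_{\Phi^-}-\lambda_{\Psi^+})\ge0$, so $0\le c_z$ and $|c_x|\le c_z$; therefore $c_x^2\sin^2\theta+c_z^2\cos^2\theta\le c_z^2$, and since $p\mapsto h\big((1+p)/2\big)$ is non-increasing on $[0,1]$ the subtracted term is smallest at $\theta=0$, so $\chi_\lambda(B_1:E)\le H(\{\lambda_k\})-h\big((1+c_z)/2\big)$. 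For $H(\{\lambda_k\})$ I would use that the Bell basis simultaneously diagonalizes the two \emph{commuting} $\pm1$-valued observables $\sigma_z\otimes\sigma_z$ and $\sigma_x\otimes\sigma_x$, and that the pair of eigenvalues determines the Bell state; hence $\{\lambda_k\}$ is exactly the joint outcome statistics of measuring these two observables on $\rho_{AB}$, and subadditivity of the Shannon entropy gives
\[
H(\{\lambda_k\})\le h\!\left(\frac{1+c_z}{2}\right)+h\!\left(\frac{1+c_x}{2}\right),
\]
using $\moy{\sigma_z\otimes\sigma_z}=c_z$ and $\moy{\sigma_x\otimes\sigma_x}=c_x$ together with the corresponding balanced marginals. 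Combining the two inequalities yields $\chi_\lambda(B_1:E)\le h\big((1+|c_x|)/2\big)$.

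Finally I would bring in the CHSH value. By the Horodecki criterion the largest CHSH violation attainable by $\rho_\lambda$ with in-plane measurements is $S_\lambda=2\sqrt{c_x^2+c_z^2}$, so $(S_\lambda/2)^2-1=c_x^2+c_z^2-1\le c_x^2$ because $c_z\le1$; hence $\sqrt{(S_\lambda/2)^2-1}\le|c_x|$, and since $h$ is decreasing on $[1/2,1]$,
\[
\chi_\lambda(B_1:E)\le h\!\left(\frac{1+|c_x|}{2}\right)\le h\!\left(\frac{1+\sqrt{(S_\lambda/2)^2-1}}{2}\right),
\]
which is \eqref{estimate2}. (If $S_\lambda<2$ the right-hand side is $\ge h(1/2)=1\ge\chi_\lambda(B_1:E)$, so the claim is trivial there.)

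The main obstacle is getting a sharp enough handle on $H(\{\lambda_k\})$: attempted head-on this is a constrained maximization of a four-term entropy over the simplex of Bell eigenvalues with $c_x,c_z$ held fixed, and it is the subadditivity observation above — the eigenvalues being the joint statistics of two commuting observables — that reduces it to a one-line estimate; without that one is led into a messy Lagrange-multiplier computation. The only other delicate point is bookkeeping: one must make sure the $S_\lambda$ in \eqref{estimate2} really equals $2\sqrt{c_x^2+c_z^2}$ for the reduced state, i.e. that the $(x,z)$ plane fixed in Lemma~\ref{lemmatwo} is the plane in which CHSH is estimated and that the $\sigma_y\otimes\sigma_y$-symmetrization and axis relabellings used there left the realized violation unchanged. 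The remaining ingredients — the Schmidt identity $S(\rho_{E|b_1})=S(\rho_{A|b_1})$ with the post-measurement $AE$ state pure (which also gives $S(\rho_{E|-1})=S(\rho_{E|+1})$), the elementary qubit computation of $\rho_{A|b_1}$, and the monotonicity of $h\big((1+p)/2\big)$ — are routine.
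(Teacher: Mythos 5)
Your overall architecture matches the paper's (purify $\rho_\lambda$, compute $S(\rho_{E|b_1})$ explicitly, optimize over Bob's angle, then bound the resulting entropy difference and relate it to the Horodecki CHSH value), and two of your steps are genuine improvements. Computing $S(\rho_{E|b_1})$ as $S(\rho_{A|b_1})$ via purity of the post-measurement $AE$ state is cleaner than the paper's direct diagonalization of Eve's rank-two conditional state, and gives the same quantity: the paper's $\Lambda_\pm$ reduce exactly to $\frac{1}{2}\bigl(1\pm\sqrt{c_x^2\sin^2\theta+c_z^2\cos^2\theta}\bigr)$. More importantly, your subadditivity observation --- that $\underline{\lambda}$ is the joint distribution of the commuting observables $\sigma_z\otimes\sigma_z$ and $\sigma_x\otimes\sigma_x$, so $H(\underline{\lambda})\le h(\lambda_{\Phi^+}+\lambda_{\Phi^-})+h(\lambda_{\Phi^+}+\lambda_{\Psi^+})$ --- replaces the paper's Lemma~\ref{lemmafive}, which is proved there by a lengthy constrained optimization in the $(R,\theta,\delta)$ parametrization. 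Your resulting bound $F\le h\bigl((1+|c_x|)/2\bigr)$ is at least as strong as the paper's $h\bigl((1+\sqrt{c_x^2+c_z^2-1})/2\bigr)$ and coincides with it on the saturating states $\lambda_{\Psi^\pm}=0$. This is a real simplification.

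There is, however, a concrete gap in your last step. You set $S_\lambda=2\sqrt{c_x^2+c_z^2}$, but that is only the maximum over measurements in the $(x,z)$ plane; the lemma's $S_\lambda$ is the largest violation by $\rho_\lambda$ over all measurements, which by the Horodecki criterion equals $2\sqrt{c_z^2+\max\{c_x^2,c_y^2\}}$ --- exactly the two-branch maximum in the paper's Lemma~\ref{lemmaseven}. The ordering (\ref{orderlambdalemma}) forces $c_z\ge|c_x|$ but constrains neither the sign nor the size of $c_y$ relative to $c_x$: for instance $\underline{\lambda}=(0.1,\,0.85,\,0.05,\,0)$ satisfies the ordering and has $|c_x|=0.7$, $c_y=0.8$, $c_z=0.9$, so $S_\lambda=2\sqrt{1.45}>2\sqrt{c_x^2+c_z^2}$ and the state violates CHSH. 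In that regime the right-hand side of (\ref{estimate2}) is strictly smaller than the bound your chain delivers, and the step $\sqrt{(S_\lambda/2)^2-1}\le|c_x|$ is no longer justified as written. The fix stays entirely inside your framework: run the identical subadditivity argument with $\sigma_y\otimes\sigma_y$ in place of $\sigma_x\otimes\sigma_x$ (the pair $(\sigma_z\otimes\sigma_z,\sigma_y\otimes\sigma_y)$ also determines the Bell state, with $P(s_y=+1)=\lambda_{\Phi^-}+\lambda_{\Psi^+}=(1+c_y)/2$) to obtain $\chi_\lambda\le h\bigl((1+|c_y|)/2\bigr)$ as well, hence $\chi_\lambda\le h\bigl((1+\max\{|c_x|,|c_y|\})/2\bigr)$; then $(S_\lambda/2)^2-1=\max\{c_x^2,c_y^2\}+c_z^2-1\le\max\{c_x^2,c_y^2\}$ closes the argument. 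This is the precise analogue of the paper's second case, where the relabelling $\lambda_{\Psi^+}\leftrightarrow\lambda_{\Psi^-}$ is invoked to cover the second branch of (\ref{chshmax}).
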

In order to prove Lemma \ref{lemmathree} we have to bound
$\chi(B_1:E)=S(\rho_E)-\sum_{b_1=\pm
1}p(b_1)S\left(\rho_{E|b_1}\right)$. For the Bell diagonal state
(\ref{belldiaglemma}) one has \ba\label{chistep3}
\chi_{\lambda}(B_1:E)&=&H\left(\underline{\lambda}\right)\,-\,\demi
\left[S\left(\rho_{E|b_1=1}\right)+S\left(\rho_{E|b_1=-1}\right)\right]\,,
\ea where $H$ is Shannon entropy and where we have adopted the
notation $\underline{\lambda}\equiv\{ \lambda_{\Phi^+},
\lambda_{\Phi^-}, \lambda_{\Psi^+}, \lambda_{\Psi^-}\}$.
We divide the proof of Lemma \ref{lemmathree} into three parts. In
the first part, we prove that, for any given Bell-diagonal state,
Eve's best choice for Bob's measurement is $B_1=\sigma_z$, which
allows to express~(\ref{chistep3}) solely in term of the
eigenvalues $\underline{\lambda}$. In the second part, we obtain
an inequality between entropies. In the third part, we compute the
maximal violation of the CHSH inequality for states of the form
(\ref{belldiaglemma}).

\paragraph{Step 3, Part 1: Upper bound for given Bell-diagonal state}
\begin{lemma} \label{lemmafour}
For a Bell-diagonal state $\rho_{\lambda}$ with eigenvalues
${\lambda}$ ordered as in (\ref{orderlambdalemma}) and for
measurements in the $(x,z)$ plane, \be \chi_{{\lambda}}(B_1:E)\leq
H\left(\underline{\lambda}\right)-h(\lambda_{\Phi^+}+\lambda_{\Phi^-})\,.\label{chilambda2}
\ee
\end{lemma}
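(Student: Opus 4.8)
The plan is to compute Eve's state and the conditional states explicitly for a Bell-diagonal $\rho_\lambda$, using the fact that Eve holds a purification, and then to estimate $\chi_\lambda(B_1:E)$ from below by the least favorable choice of $B_1$. First I would write down a purification of $\rho_\lambda$: since $\rho_\lambda$ is diagonal in the Bell basis with eigenvalues $\underline\lambda$, a natural purification is $\ket{\psi_{ABE}}=\sum_k \sqrt{\lambda_k}\,\ket{\beta_k}_{AB}\ket{k}_E$, where $\ket{\beta_k}$ runs over the four Bell states and $\{\ket{k}_E\}$ is orthonormal. Then $\rho_E=\mathrm{Tr}_{AB}\ket{\psi_{ABE}}\bra{\psi_{ABE}}$ is diagonal with the same spectrum, so $S(\rho_E)=H(\underline\lambda)$, which matches the first term already appearing in \eqref{chistep3}. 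The work is therefore to bound $\tfrac12[S(\rho_{E|b_1=+1})+S(\rho_{E|b_1=-1})]$ from below by $h(\lambda_{\Phi^+}+\lambda_{\Phi^-})$, which is exactly the content of (\ref{chilambda2}).

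Next I would carry out the first ``part'' announced in the text: show that among all measurements $B_1$ in the $(x,z)$ plane, Eve does best (i.e. $\chi$ is largest) when $B_1=\sigma_z$, so that it suffices to prove the bound for that choice. Concretely, parametrize $B_1 = \cos\theta\,\sigma_z + \sin\theta\,\sigma_x$; the outcome $b_1=\pm1$ corresponds to a projector $\Pi_\pm$ on Bob's qubit, and $\rho_{E|b_1}$ is (the normalized) $\mathrm{Tr}_{AB}[(\one_A\otimes\Pi_{b_1}\otimes\one_E)\ket{\psi_{ABE}}\bra{\psi_{ABE}}]$. I expect a short computation using the action of $\Pi_{b_1}$ on the Bell basis to show that the spectrum of $\rho_{E|b_1}$ — and hence $S(\rho_{E|b_1})$, and the symmetrized average — is extremized at $\theta=0$ (equivalently $\theta=\pi/2$, by the $\Phi\leftrightarrow\Psi$ symmetry of the ordering), using concavity of von Neumann entropy. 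With $B_1=\sigma_z$, the projectors $\ket0\bra0$ and $\ket1\bra1$ on Bob map the Bell basis onto product states, and one finds that each conditional state $\rho_{E|b_1}$ is supported on a two-dimensional subspace of $H_E$, with eigenvalues built from the pairs $\{\lambda_{\Phi^+},\lambda_{\Phi^-}\}$ and $\{\lambda_{\Psi^+},\lambda_{\Psi^-}\}$: specifically I expect $S(\rho_{E|b_1=+1})=S(\rho_{E|b_1=-1})=h\!\left(\frac{\lambda_{\Phi^+}+\lambda_{\Phi^-}}{(\lambda_{\Phi^+}+\lambda_{\Phi^-})+(\lambda_{\Psi^+}+\lambda_{\Psi^-})}\cdot(\ldots)\right)$ — more precisely, after tracing out $AB$ the conditional state is diagonal with entries proportional to $(\lambda_{\Phi^+}+\lambda_{\Phi^-}, \lambda_{\Psi^+}+\lambda_{\Psi^-})$ up to the normalization $p(b_1)=\tfrac12$, giving $S(\rho_{E|b_1})=h(\lambda_{\Phi^+}+\lambda_{\Phi^-})$ since the four eigenvalues sum to $1$. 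Substituting into \eqref{chistep3} then gives $\chi_\lambda(B_1:E)\le H(\underline\lambda)-h(\lambda_{\Phi^+}+\lambda_{\Phi^-})$, which is (\ref{chilambda2}).

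The main obstacle I anticipate is the optimization over $B_1$ in the $(x,z)$ plane: one must verify carefully that $\sigma_z$ is genuinely Eve's optimal (worst-for-Bob) choice and not merely a stationary point, and handle the bookkeeping of which Bell-state pairs get grouped together under a tilted measurement — the coherences $ir_1, ir_2$ that were present before the final symmetrization in Lemma \ref{lemmatwo} are gone, but the off-diagonal structure induced in $\rho_{E|b_1}$ by a tilted $B_1$ still has to be controlled, again via concavity of $S$. Everything else (writing the purification, tracing out, identifying the entropies with binary-entropy expressions) should be a direct if slightly tedious computation. I would present the $B_1=\sigma_z$ optimality as a short sub-argument, then the explicit spectrum of $\rho_{E|b_1}$, then assemble (\ref{chilambda2}).
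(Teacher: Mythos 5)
Your route is the same as the paper's: purify $\rho_\lambda$ as $\sum_k\sqrt{\lambda_k}\ket{\beta_k}_{AB}\ket{e_k}_E$, note $S(\rho_E)=H(\underline{\lambda})$, and reduce the lemma to showing that the average conditional entropy is at least $h(\lambda_{\Phi^+}+\lambda_{\Phi^-})$, attained at $B_1=\sigma_z$. Your evaluation at $B_1=\sigma_z$ is correct: the conditional state is rank two with eigenvalues $\lambda_{\Phi^+}+\lambda_{\Phi^-}$ and $\lambda_{\Psi^+}+\lambda_{\Psi^-}$, because the $\{\ket{e_1},\ket{e_2}\}$ and $\{\ket{e_3},\ket{e_4}\}$ blocks stay orthogonal. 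The genuine gap is the step you defer, namely that $\sigma_z$ is Eve's \emph{optimal} choice among all $(x,z)$-plane measurements. This is the only place where the ordering hypothesis (\ref{orderlambdalemma}) is used, and concavity of the von Neumann entropy will not deliver it: the projectors of a tilted $B_1=\cos\varphi\,\sigma_z+\sin\varphi\,\sigma_x$ are not convex combinations of the $\sigma_z$ projectors, and the coherences you flag as needing to be ``controlled'' are precisely what a concavity argument cannot see. What closes the step in the paper is the explicit diagonalization of the (still rank-two) conditional state for general $\varphi$, whose eigenvalues are
\be
\Lambda_\pm=\demi\left(1\pm\sqrt{X^2+Y^2+2XY\cos 2\varphi}\right),\qquad X=\lambda_{\Phi^+}-\lambda_{\Psi^-},\quad Y=\lambda_{\Phi^-}-\lambda_{\Psi^+},
\ee
independently of $b_1$. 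The ordering (\ref{orderlambdalemma}) gives $X,Y\geq 0$, hence $XY\geq 0$, so the radicand is maximized at $\cos 2\varphi=1$; the spread $\Lambda_+-\Lambda_-$ is then largest, $h(\Lambda_+)$ smallest, and $\chi_\lambda$ largest, at $\varphi=0$, where $\Lambda_+=\lambda_{\Phi^+}+\lambda_{\Phi^-}$. You need to actually carry out this computation (or an equivalent one); without the sign information from (\ref{orderlambdalemma}) the optimum would sit at $\varphi=\pi/2$ instead, so the hypothesis is doing real work and cannot be absorbed into a soft argument.

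A smaller but genuine error: your parenthetical claim that $\theta=\pi/2$ is ``equivalent'' to $\theta=0$ by a $\Phi\leftrightarrow\Psi$ symmetry is false. At $\varphi=\pi/2$ the radicand above becomes $(X-Y)^2$ and one finds $S(\rho_{E|b_1})=h(\lambda_{\Phi^+}+\lambda_{\Psi^+})$: measuring $\sigma_x$ groups the Bell states as $\{\Phi^+,\Psi^+\}$ versus $\{\Phi^-,\Psi^-\}$, which under the stated ordering is at least as large as $h(\lambda_{\Phi^+}+\lambda_{\Phi^-})$ and strictly worse for Eve whenever $XY>0$. The two settings are exchanged only by a relabelling that also permutes the $\lambda$'s, and that symmetry is exactly what (\ref{orderlambdalemma}) has broken.
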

\begin{proof}
Let us compute $S\left(\rho_{E|b_1}\right)$. First, one gives Eve
the purification of $\rho_\lambda$: \ba \ket{\Psi}_{ABE}&=&
\sqrt{\lambda_{\Phi^+}}\ket{\Phi^+}\ket{e_1} +
\sqrt{\lambda_{\Phi^-}}\ket{\Phi^-}\ket{e_2} +
\sqrt{\lambda_{\Psi^+}}\ket{\Psi^+}\ket{e_3} +
\sqrt{\lambda_{\Psi^-}}\ket{\Psi^-}\ket{e_4} \ea with
$\braket{e_i}{e_j}=\delta_{ij}$. By tracing Alice out, one obtains
$\rho_{BE}$.

Now, Bob measures in the $x,z$ plane. His measurement $B_1$ can be
written as \ba
B_1=\cos{\varphi}\,\sigma_z+\sin\varphi\,\sigma_x\,.\ea After the
measurement, the system is projected in one of the eigenstates of
$B_1$ which can be written as
$\ket{b_1}=\sqrt{\frac{1+b_1\cos\varphi}{2}}\ket{0}+b_1\sqrt{\frac{1-b_1\cos\varphi}{2}}\ket{1}$
when $\varphi\in [0,\pi]$. The case $\varphi\in [\pi,2\pi]$
corresponds to a flip of the outcome $b_1$, but as the result that
follows is independent of the value of $b_1$, it is sufficient to
consider $\varphi\in [0,\pi]$.  The reduced density matrix of Eve
conditioned on the value of $b_1$ is given by \ba
\rho_{E|b_1}=\ket{\psi^+(b_1)}\bra{\psi^+(b_1)} +
\ket{\psi^-(-b_1}\bra{\psi^-(-b_1)}\,, \ea where we have defined
the two non-normalized states \ba
\ket{\psi^\sigma(b_1)}&=&\sqrt{\frac{1+b_1\cos\varphi}{2}}\left[
\sqrt{\lambda_{\Phi^+}}\ket{e_1} + \sigma
\sqrt{\lambda_{\Phi^-}}\ket{e_2} \right. \nonumber\\&& +  b_1
\sqrt{\frac{1-b_1\cos\varphi}{2}}\left[
\sqrt{\lambda_{\Psi^+}}\ket{e_3} +
\sigma\sqrt{\lambda_{\Psi^-}}\ket{e_4}\right]\,. \ea The
calculation of the eigenvalues of a rank two matrix is a standard
procedure. The result is that the eigenvalues of $\rho_{E|b_1}$
are independent of $b_1$ and are given by \ba
\Lambda_{\pm}=\demi\left(1\pm\,\sqrt{(\lambda_{\Phi^+}-\lambda_{\Psi^-})^2
+ (\lambda_{\Phi^-}-\lambda_{\Psi^+})^2+2\cos
2\varphi(\lambda_{\Phi^+}-\lambda_{\Psi^-})(\lambda_{\Phi^-}-\lambda_{\Psi^+})}\right)\,.
\ea Therefore we have obtained
$S\left(\rho_{E|b_1=1}\right)=S\left(\rho_{E|b_1=-1}\right)=h(\Lambda_+)$,
that is \ba
\chi_{\lambda}(B_1:E)=H\left(\underline{\lambda}\right)-h(\Lambda_+)\,.
\ea Now, for any set of $\lambda$'s, Eve's information is the
largest for the choice of $\varphi$ that minimizes $h(\Lambda_+)$,
which is the one for which the difference $\Lambda_+-\Lambda_-$ is
the largest. Because of (\ref{orderlambda}), the product
$(\lambda_{\Phi^+}-\lambda_{\Psi^-})(\lambda_{\Phi^-}-\lambda_{\Psi^+})$
is non-negative and the maximum is obtained for $\varphi=0$, i.e.,
$B_1=\sigma_z$. This gives the upper bound that we wanted \ba
\chi_{\lambda}(B_1:E)\leq
H\left(\underline{\lambda}\right)-h(\lambda_{\Phi^+}+\lambda_{\Phi^-})\,.\label{chilambda}
\ea
\end{proof}

\paragraph{Step 3, Part 2: Entropic Inequality}
\begin{lemma} \label{lemmafive}
Let $\underline{\lambda}$ be probabilities, i.e.
$\lambda_{\Phi^+}, \lambda_{\Phi^-}, \lambda_{\Psi^+},
\lambda_{\Psi^-} \geq 0$ and $\lambda_{\Phi^+}+ \lambda_{\Phi^-}+
\lambda_{\Psi^+}+ \lambda_{\Psi^-}=1$. Let $R^2 =
(\lambda_{\Phi^+}-\lambda_{\Psi^-})^2 +  (\lambda_{\Phi^-}-
\lambda_{\Psi^+})^2$. Then \ba
F(\underline{\lambda})=H\left(\underline{\lambda}\right)-h(\lambda_{\Phi^+}+\lambda_{\Phi^-})&\leq
&
 h\left(\frac{1+\sqrt{
 2 R^2
-1}}{2}\right) \quad \mbox{if $R^2 >1/2$}\label{estimate4a}\\
&\leq & 1 \quad \mbox{if $R^2 \leq 1/2$}
\,, \label{estimate4} \ea
with equality in eq. (\ref{estimate4a}) if and only if $\lambda_{\Phi^\pm}=0$ or $\lambda_{\Psi^\pm}=0$.
\end{lemma}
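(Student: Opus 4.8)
The plan is to squeeze $F$ between the entropy of one well-chosen \emph{binary} distribution and the right-hand side, in two elementary moves — an entropy/concavity step and a geometric step — arranged so that they dovetail.

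First I would eliminate the subtracted term using the chain rule for the Shannon entropy. Put $u:=\lambda_{\Phi^+}+\lambda_{\Phi^-}$ and split the four Bell weights into the pairs $\{\Phi^+,\Phi^-\}$ (total weight $u$) and $\{\Psi^+,\Psi^-\}$ (total weight $1-u$). Then $H(\underline{\lambda})=h(u)+u\,h(\lambda_{\Phi^+}/u)+(1-u)\,h(\lambda_{\Psi^-}/(1-u))$, with the usual convention that a term with vanishing prefactor is $0$ and using $h(\lambda_{\Psi^+}/(1-u))=h(\lambda_{\Psi^-}/(1-u))$. Since the subtracted term is exactly $h(\lambda_{\Phi^+}+\lambda_{\Phi^-})=h(u)$, it cancels and
\be
F(\underline{\lambda})=u\,h\!\left(\frac{\lambda_{\Phi^+}}{u}\right)+(1-u)\,h\!\left(\frac{\lambda_{\Psi^-}}{1-u}\right)\,.
\ee
By concavity of $h$ (Jensen with weights $u$ and $1-u$), $F(\underline{\lambda})\le h(\lambda_{\Phi^+}+\lambda_{\Psi^-})$. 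The reason for choosing this particular grouping is that the two weights surviving Jensen, $\lambda_{\Phi^+}$ and $\lambda_{\Psi^-}$, form one of the two pairs that appear in $R^2$.

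Second, set $P:=\lambda_{\Phi^+}+\lambda_{\Psi^-}$, so that $1-P=\lambda_{\Phi^-}+\lambda_{\Psi^+}$. Non-negativity of the four weights gives $|\lambda_{\Phi^+}-\lambda_{\Psi^-}|\le P$ and $|\lambda_{\Phi^-}-\lambda_{\Psi^+}|\le 1-P$, hence
\be
R^2=(\lambda_{\Phi^+}-\lambda_{\Psi^-})^2+(\lambda_{\Phi^-}-\lambda_{\Psi^+})^2\ \le\ P^2+(1-P)^2\,.
\ee
Now $h$ is a strictly decreasing function of the collision probability $x^2+(1-x)^2$ of the binary distribution $(x,1-x)$: indeed $h$ is symmetric about $\demi$ and decreasing on $[\demi,1]$, while $x^2+(1-x)^2=1-2x(1-x)$ is increasing on $[\demi,1]$. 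When $R^2>\demi$, the number $\mu:=\tfrac12\big(1+\sqrt{2R^2-1}\big)$ is the root in $[\demi,1]$ of $\mu^2+(1-\mu)^2=R^2$ (one checks $\mu(1-\mu)=(1-R^2)/2$); since $P^2+(1-P)^2\ge R^2=\mu^2+(1-\mu)^2$, it follows that $h(\lambda_{\Phi^+}+\lambda_{\Psi^-})=h(P)\le h(\mu)$, which is (\ref{estimate4a}). When $R^2\le\demi$ one simply uses $h(P)\le 1$, giving (\ref{estimate4}). Chaining with the first move proves the Lemma.

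For the equality clause I would trace back through both inequalities. Equality in the geometric move forces $P^2+(1-P)^2=R^2$, i.e.\ $\lambda_{\Phi^+}\lambda_{\Psi^-}=\lambda_{\Phi^-}\lambda_{\Psi^+}=0$; equality in Jensen (strict concavity of $h$) forces $\lambda_{\Phi^+}/u=\lambda_{\Psi^-}/(1-u)$, i.e.\ $\lambda_{\Phi^+}\lambda_{\Psi^+}=\lambda_{\Phi^-}\lambda_{\Psi^-}$, apart from the degenerate cases $u\in\{0,1\}$ (which already mean $\lambda_{\Phi^\pm}=0$ or $\lambda_{\Psi^\pm}=0$). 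A short case analysis of the conjunction of these vanishing-product conditions shows it is satisfied precisely when $\lambda_{\Phi^\pm}=0$ or $\lambda_{\Psi^\pm}=0$, and a direct substitution confirms those configurations saturate (\ref{estimate4a}). There is no real obstacle in this argument: the only delicate point is bookkeeping — picking the chain-rule grouping so that the pair left after Jensen is an ``$R^2$-pair'', and handling the non-monotonicity of $h$ by arguing throughout in terms of $x^2+(1-x)^2$ (equivalently $|x-\demi|$) rather than $x$ itself.
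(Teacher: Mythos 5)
Your proof is correct, and it takes a genuinely different and more elementary route than the paper's. The paper parameterizes the eigenvalues as $\lambda = 1/4 \pm (R/2)\cos\theta \pm \delta$, etc., and runs a two-stage calculus optimization: it locates the maximizer $\delta^*(\theta)$ of $F$ at fixed $(R,\theta)$ by setting $\partial_\delta F=0$ and checking $\partial^2_\delta F<0$, then differentiates along the curve $\delta=\delta^*(\theta)$ to show that the interior critical points $\theta=\pm\pi/4,\pm 3\pi/4$ fall outside the admissible domain when $R>1/\sqrt{2}$, forcing the maximum onto the boundary of the $\theta$-range, where $F$ is evaluated explicitly. You instead use the grouping property of the Shannon entropy to cancel the subtracted term exactly, apply Jensen once to get $F\leq h(\lambda_{\Phi^+}+\lambda_{\Psi^-})$, and then close with the purely algebraic observation that $R^2\leq P^2+(1-P)^2$ together with the fact that $h$ decreases in the collision probability $x^2+(1-x)^2$; the key insight is choosing the representative $\lambda_{\Psi^-}/(1-u)$ (rather than $\lambda_{\Psi^+}/(1-u)$) for the second group so that the post-Jensen argument $P=\lambda_{\Phi^+}+\lambda_{\Psi^-}$ matches one of the two differences entering $R^2$. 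Your two-inequality chain also yields the equality characterization almost for free (each inequality is saturated iff certain products of eigenvalues vanish, and the conjunction reduces to $\lambda_{\Phi^\pm}=0$ or $\lambda_{\Psi^\pm}=0$), whereas the paper obtains it by inspecting the boundary cases of its parameterization. What the paper's heavier computation buys in exchange is a complete picture of where the constrained maximum of $F$ sits for every fixed $(R,\theta)$ -- information that is not needed for the lemma itself. Both arguments are sound; yours is shorter and avoids calculus entirely.
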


\begin{proof}
We can parameterize the $\lambda$'s as:
\ba
\lambda_{\Phi^+} &=& \frac{1}{4} + \frac{R}{2}\cos \theta + \delta \nonumber\\
\lambda_{\Phi^-} &=& \frac{1}{4} + \frac{R}{2}\sin \theta - \delta \nonumber\\
\lambda_{\Psi^-} &=& \frac{1}{4} - \frac{R}{2}\cos \theta + \delta \nonumber\\
\lambda_{\Psi^+} &=& \frac{1}{4} - \frac{R}{2}\sin \theta - \delta\,.
\ea
The  conditions
$\lambda_{\Phi^+}, \lambda_{\Phi^-}, \lambda_{\Psi^+}, \lambda_{\Psi^-} \geq 0$
 imply
\be
-\frac{1}{4} + \frac{R}{2}|\cos \theta | \leq \delta \leq \frac{1}{4} - \frac{R}{2}|\sin \theta|\,.
\label{domain1}
\ee
There is a solution for $\delta$ if and only if
\be\label{cond1}
|\cos \theta | + |\sin \theta|  \leq \frac{1}{R}\,.
\ee
This condition is non trivial if $R > 1/\sqrt{2}$.

When $R > 1/\sqrt{2}$, the extremal values of $\theta$, solution
of $|\cos \theta | + |\sin \theta|  = \frac{1}{R}$, correspond to
$\lambda_{\Phi^+}=0$ or $\lambda_{\Psi^-}=0$, and
$\lambda_{\Phi^-}=0$ or $\lambda_{\Psi^+}=0$. When both
$\lambda_{\Phi^\pm}=0$ or both $\lambda_{\Psi^\pm}=0$,
$F(\underline{\lambda})=h(1/2+(\sqrt{
 2 R^2-1})/2)$ and one has equality in eq.~(\ref{estimate4a}). In the other cases, $F(\underline{\lambda})=0$ and the inequality~(\ref{estimate4a}) is satisfied. Our strategy is to prove that when
$R > 1/\sqrt{2}$, the maximum of $F(\underline{\lambda})$ occurs
when  $|\cos \theta | + |\sin \theta|  = \frac{1}{R}$, i.e., at
the edge of the allowed domain for $\theta$. This will establish
(\ref{estimate4a}).

Let us start by finding the maximum of $F$ for fixed $R$ and
$\theta$. To this end, we compute the derivative of $F$ with
respect to $\delta$ \be \frac{\partial}{\partial \delta}
F\left(\underline{\lambda}\right) = - \log_2 \lambda_{\Phi^+}
+\log_2 \lambda_{\Phi^-} +\log_2 \lambda_{\Psi^+} -\log_2
\lambda_{\Psi^-}\,.\ee The derivative with respect to $\delta$
vanishes if and only if $ \lambda_{\Phi^+}  \lambda_{\Psi^-}  =
\lambda_{\Phi^-} \lambda_{\Psi^+}$, which is equivalent to \be
\delta = \delta^*(\theta)=\frac{R^2}{4}\left( \cos^2 \theta -
\sin^2 \theta \right)\,. \ee Note that $\delta^*(\theta)$ always
belongs to the domain (\ref{domain1}) for $\theta$ satisfying
(\ref{cond1}), i.e., it is an extremum of $F$. We also have that
\be \frac{\partial^2}{\partial_\delta^2}
F\left(\underline{\lambda}\right) = - \frac{1}{\lambda_{\Phi^+}} -
\frac{1}{ \lambda_{\Phi^-} } - \frac{1}{\lambda_{\Psi^+} } -
\frac{1}{\lambda_{\Psi^-}} < 0\,,\ee which shows that
$\delta^*(\theta)$ is a maximum of $F$ (not a minimum).

We have thus identified the unique maximum of $F$ at fixed
$\theta$. Let us now take the optimal value of $\delta =
\delta^*(\theta)$, and let $\theta$ vary. We compute the
derivative of $F$ with respect to $\theta$ along the curve $\delta
= \delta^*(\theta)$: \ba \frac{d}{d \theta} F |_{\delta=\delta^*}
&=& \frac{\partial}{\partial_\delta} F |_{\delta=\delta^*}\frac{d
\delta^*(\theta)}{d \theta} + \frac{\partial}{\partial_\theta} F
|_{\delta=\delta^*} = \frac{\partial}{\partial_\theta}
F |_{\delta=\delta^*}\nonumber\\
&=& -\frac{R}{2} \cos \theta \log_2 \left( \frac{ \lambda_{\Phi^-}
(\lambda_{\Psi^+} + \lambda_{\Psi^-} ) } {\lambda_{\Psi^+} (
\lambda_{\Phi^+} + \lambda_{\Phi^-})}\right) +\frac{R}{2} \sin
\theta \log_2 \left( \frac{  \lambda_{\Phi^+} (\lambda_{\Psi^+} +
\lambda_{\Psi^-} ) } {\lambda_{\Psi^-} ( \lambda_{\Phi^+} +
\lambda_{\Phi^-})}\right)\,. \ea Now, when $\delta=\delta^*$, we
have the identities: \be \frac{ \lambda_{\Phi^+}  } {
\lambda_{\Phi^+} + \lambda_{\Phi^-}} =\frac{ \lambda_{\Psi^+}  } {
\lambda_{\Psi^+} + \lambda_{\Psi^-}} = \frac{1}{2} + \frac{R}{2}
\cos \theta - \frac{R}{2} \sin \theta\,, \ee and \be \frac{
\lambda_{\Phi^-}  } { \lambda_{\Phi^+} + \lambda_{\Phi^-}} =\frac{
\lambda_{\Psi^-}  } { \lambda_{\Psi^+} + \lambda_{\Psi^-}} =
\frac{1}{2} - \frac{R}{2} \cos \theta + \frac{R}{2} \sin \theta\,.
\ee Using these relations we obtain \ba \frac{d}{d_\theta} F
|_{\delta=\delta^*} &=& -\frac{R}{2} \left ( \cos \theta + \sin
\theta  \right) \log_2 \frac{ 1 - R\cos \theta + R \sin \theta } {
1 + R\cos \theta - R \sin \theta }\,. \ea This quantity vanishes
(i.e. we have an extremum) if and only if $\cos \theta + \sin
\theta=0$ or $\cos \theta - \sin \theta =0$, that is $\theta=\pm
\pi/4, \pm 3 \pi/4$.

When $R> 1/\sqrt{2}$ the points $\theta=\pm \pi/4, \pm 3 \pi/4$
lie outside the allowed domain for $\theta$. Hence the maximum of
$G$ occurs when $\theta$ lies at the edge of its allowed domain.
As discussed above, this proves our claim when $R> 1/\sqrt{2}$.

When $R\leq 1/\sqrt{2}$, the extrema can be reached. Note that
$\theta=\pm \pi/4, \pm 3 \pi/4$ implies $\delta^*=0$. One then
easily checks that the maximum of $F$ occurs when $\theta = \pi/4,
-3 \pi/4$, whereupon $F=1$.  This establishes
eq.~(\ref{estimate4}).
\end{proof}

\paragraph{Step 3, Part 3: Violation of CHSH}
\begin{lemma}\label{lemmaseven}
The maximal violation $S_\lambda$ of the CHSH inequality for a
Bell diagonal state $\rho_\lambda$ given by (\ref{belldiaglemma})
with eigenvalues ordered according to  (\ref{orderlambdalemma}) is
\be\label{chshmax} S_\lambda = \max\left\{ 2\sqrt{2}\,\sqrt{
(\lambda_{\Phi^+}- \lambda_{\Psi^-})^2 + (\lambda_{\Phi^-} -
\lambda_{\Psi^+})^2 }\,,\,2\sqrt{2}\,\sqrt{(\lambda_{\Phi^+}-
\lambda_{\Psi^+})^2 + (\lambda_{\Phi^-} - \lambda_{\Psi^-})^2}
\right\} \ee
\end{lemma}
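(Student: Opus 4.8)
The plan is to reduce Lemma~\ref{lemmaseven} to the Horodecki \emph{et al.}\ characterisation of the maximal CHSH value of a two-qubit state, and then to specialise it to a Bell-diagonal state, using the ordering~(\ref{orderlambdalemma}) to replace the generic ``sum of the two largest squared singular values'' by the explicit maximum of two terms appearing in~(\ref{chshmax}).

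First I would associate to $\rho_\lambda$ its correlation matrix $T$, with entries $T_{ij}=\mathrm{Tr}\,(\rho_\lambda\,\sigma_i\otimes\sigma_j)$ for $i,j\in\{x,y,z\}$. For projective measurements along unit vectors $\hat a_1,\hat a_2$ on Alice's side and $\hat b_1,\hat b_2$ on Bob's side, the CHSH value equals $\hat a_1\cdot T(\hat b_1+\hat b_2)+\hat a_2\cdot T(\hat b_1-\hat b_2)$. Optimising over $\hat a_1,\hat a_2$ gives $\|T(\hat b_1+\hat b_2)\|+\|T(\hat b_1-\hat b_2)\|$; writing $\hat b_1\pm\hat b_2$ as $2\cos\theta\,\hat c$ and $2\sin\theta\,\hat c'$ with $\hat c\perp\hat c'$ orthonormal, and maximising first over $\theta$ and then over the orthonormal pair, one obtains the standard bound $S_{\max}=2\sqrt{t_1+t_2}$, where $t_1\ge t_2\ge t_3$ are the eigenvalues of $T^\top T$. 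I would either invoke this result directly or include the short optimisation argument.

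Next I would compute $T$ explicitly. Since $\rho_\lambda$ is diagonal in the Bell basis and each Bell state is a common eigenstate of $\sigma_x\otimes\sigma_x,\ \sigma_y\otimes\sigma_y,\ \sigma_z\otimes\sigma_z$, the matrix $T$ is diagonal, $T=\mathrm{diag}(T_{xx},T_{yy},T_{zz})$, with
\[
T_{xx}=\lambda_{\Phi^+}-\lambda_{\Psi^-}-\lambda_{\Phi^-}+\lambda_{\Psi^+},\qquad
T_{yy}=-\lambda_{\Phi^+}-\lambda_{\Psi^-}+\lambda_{\Phi^-}+\lambda_{\Psi^+},\qquad
T_{zz}=\lambda_{\Phi^+}-\lambda_{\Psi^-}+\lambda_{\Phi^-}-\lambda_{\Psi^+},
\]
read off from the signs of $\moy{\sigma_i\otimes\sigma_i}$ on the four Bell states. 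Hence the eigenvalues of $T^\top T$ are $T_{xx}^2,T_{yy}^2,T_{zz}^2$, and I need the sum of the two largest of these three numbers. This is where~(\ref{orderlambdalemma}) enters: writing $T_{zz}=(\lambda_{\Phi^+}-\lambda_{\Psi^-})+(\lambda_{\Phi^-}-\lambda_{\Psi^+})$ as a sum of two nonnegative terms and $T_{xx}=(\lambda_{\Phi^+}-\lambda_{\Psi^-})-(\lambda_{\Phi^-}-\lambda_{\Psi^+})$ as their difference yields $|T_{xx}|\le T_{zz}$, so $T_{zz}^2$ is always among the two largest and $t_1+t_2=T_{zz}^2+\max(T_{xx}^2,T_{yy}^2)=\max\!\big(T_{zz}^2+T_{xx}^2,\,T_{zz}^2+T_{yy}^2\big)$.

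Finally I would simplify the two candidates with the parallelogram identities $T_{zz}+T_{xx}=2(\lambda_{\Phi^+}-\lambda_{\Psi^-})$, $T_{zz}-T_{xx}=2(\lambda_{\Phi^-}-\lambda_{\Psi^+})$, $T_{zz}-T_{yy}=2(\lambda_{\Phi^+}-\lambda_{\Psi^+})$, $T_{zz}+T_{yy}=2(\lambda_{\Phi^-}-\lambda_{\Psi^-})$, which give $T_{zz}^2+T_{xx}^2=2[(\lambda_{\Phi^+}-\lambda_{\Psi^-})^2+(\lambda_{\Phi^-}-\lambda_{\Psi^+})^2]$ and $T_{zz}^2+T_{yy}^2=2[(\lambda_{\Phi^+}-\lambda_{\Psi^+})^2+(\lambda_{\Phi^-}-\lambda_{\Psi^-})^2]$; substituting into $S_\lambda=2\sqrt{t_1+t_2}$ reproduces exactly~(\ref{chshmax}). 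The step I expect to need the most care is the reduction of the three-term maximum to a two-term maximum, i.e.\ checking that $T_{zz}^2$ is never the strictly smallest of the three squared correlators: this is precisely where hypothesis~(\ref{orderlambdalemma}) is indispensable, since dropping it would force the third alternative $2\sqrt2\,\sqrt{(\lambda_{\Phi^+}-\lambda_{\Phi^-})^2+(\lambda_{\Psi^+}-\lambda_{\Psi^-})^2}$ into the formula.
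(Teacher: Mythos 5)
Your proposal is correct and follows essentially the same route as the paper: both invoke the Horodecki criterion $S_{\max}=2\sqrt{t_1+t_2}$, compute the diagonal correlation matrix $T$ of the Bell-diagonal state, and use the ordering~(\ref{orderlambdalemma}) to conclude $T_{zz}\geq|T_{xx}|$, so that the sum of the two largest eigenvalues of $T^\top T$ is $T_{zz}^2+\max(T_{xx}^2,T_{yy}^2)$, which the parallelogram identities convert into the two expressions in~(\ref{chshmax}). Your write-up is somewhat more explicit than the paper's (which simply cites the Horodecki result and states ``hence either \ldots or \ldots''), but there is no substantive difference.
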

\begin{proof}
For any given two-qubit state $\rho$, the maximum value of the
CHSH expression can be computed using the following recipe
\cite{horodecki}: let $T$ be the tensor with entries
$t_{ij}=\mathrm{Tr}\left[\sigma_i\otimes\sigma_j\,\rho\right]$,
and let $\tau_1$ and $\tau_2$ be the two largest eigenvalues of
the symmetric matrix $T^TT$. Then, for optimal measurement
$S=2\sqrt{\tau_1+\tau_2}$.

We are working with the Bell-diagonal state (\ref{belldiaglemma}),
for which \ba T_\lambda=\left(\begin{array}{ccc} \lambda_{\Phi^+}-
\lambda_{\Phi^-}+ \lambda_{\Psi^+}- \lambda_{\Psi^-}\\ &
-\lambda_{\Phi^+}+\lambda_{\Phi^-}+\lambda_{\Psi^+}-\lambda_{\Psi^-}\\
&&\lambda_{\Phi^+}+\lambda_{\Phi^-}-\lambda_{\Psi^+}-\lambda_{\Psi^-}\end{array}\right).
\ea
Taking into account the order (\ref{orderlambdalemma}), one has
$T_{zz}\geq |T_{xx}|$.
Hence either
\be \tau_1 +\tau_2 = T_{zz}^2 + T_{xx}^2 =
2\left[(\lambda_{\Phi^+}- \lambda_{\Psi^-})^2 +
(\lambda_{\Phi^-} - \lambda_{\Psi^+})^2\right]\,,\ee
 or
\be \tau_1 +\tau_2 = T_{zz}^2 + T_{yy}^2 =
2\left[(\lambda_{\Phi^+}-  \lambda_{\Psi^+})^2 +
(\lambda_{\Phi^-} - \lambda_{\Psi^-})^2\right]\,.\ee
\end{proof}

We can now provide the proof of lemma \ref{lemmathree}.
\begin{proof}[Proof of lemma \ref{lemmathree}]
In the case that $S_\lambda$ is equal to the first expression in
(\ref{chshmax}), Lemma \ref{lemmathree} immediately follows from
combining Lemmas \ref{lemmafour} and \ref{lemmafive}, since
$S_\lambda=2\sqrt{2}R$. Note that the threshold $R^2=1/2$ in Lemma
\ref{lemmafive} corresponds to the threshold for violating the
CHSH inequality.

In the other case, we once again combine Lemmas \ref{lemmafour}
and \ref{lemmafive}, and note that the function $F$ in Lemma
\ref{lemmafour} is invariant under permutation of
$\lambda_{\Psi^+}$ and $\lambda_{\Psi^-}$ with $\lambda_{\Phi^+}$,
$\lambda_{\Phi^-}$ fixed.
\end{proof}

%

\subsubsection{Step 4: Convexity argument}
To conclude the proof of the Theorem, note that if Eve sends a
mixture of Bell-diagonal states $\sum_{\lambda}
p_\lambda\,\rho_{\lambda}$ and chooses the measurements to be in
the $(x,z)$ plane, then $\chi(B_1:E)=\sum_\lambda p_\lambda\,
\chi_{\lambda}(B_1:E)$. Using \eqref{estimate2}, we then find
$\chi(B_1:E)\leq \sum_\lambda p_\lambda\, F(S_\lambda)\leq
F(\sum_\lambda p_\lambda\, S_\lambda)$, where the last inequality
holds because $F$ is concave. But since the observed violation $S$
of CHSH is necessarily such that $S\leq \sum_\lambda p_\lambda
S_\lambda$ and since $F$ is a monotonically decreasing function,
we find $\chi(B_1:E)\leq F(S)$.

\subsection{Derivation of the bound (\ref{boundstandard}) in the standard scenario}
In the standard scenario, Alice and Bob know that they are measuring qubits and have set their measurement settings in the best possible way for the reference state $\ket{\Phi^+}$. We assume one such possible choice (all the others being equivalent), the one specified in Subsection~2.1: $A_0=B_1=\sigma_z$, $A_1=(\sigma_z+\sigma_x)\sqrt{2}$, $A_2=(\sigma_z-\sigma_x)\sqrt{2}$, $B_2=\sigma_x$. Thus the CHSH polynomial becomes
\ba
CHSH&=&\sqrt{2}\left(\sigma_x\otimes \sigma_x+\sigma_z\otimes \sigma_z\right)\,.\label{formchsh}
\ea
The calculation of the unconditional security bound follows exactly the usual one, as presented for instance in Appendix A of Ref.~\cite{revqkd}. As well-known, in the usual BB84 protocol, the measured parameters are the error rate in the $Z$ and in the $X$ basis, $\varepsilon_{z,x}$; if the $Z$ basis is used for the key and the $X$ basis for parameter estimation, Eve's information is bounded by \ba\chi(A_Z:E)=\chi(B_Z:E)=h(\varepsilon_x)\label{boundusual}\,.\ea

In our case, $\varepsilon_z=Q$; but instead of $\varepsilon_x$, the parameter from which Eve's information is inferred is the average value $S$ of the CHSH polynomial. Given (\ref{formchsh}), the evaluation of $S$ on a Bell-diagonal state is straightforward: $S=2\sqrt{2}(\lambda_1-\lambda_4)$. Now, with the parametrization $\lambda_1=(1-\varepsilon_z)(1-u)$ and $\lambda_4=\varepsilon_z v$, we immediately obtain $\lambda_1-\lambda_4=1-\varepsilon_z-[(1-\varepsilon_z)u+\varepsilon_z v]= 1-\varepsilon_z-\varepsilon_x$ because of Eq.~(A7) of Ref.~\cite{revqkd}. Therefore $S=2\sqrt{2}(1-Q-\varepsilon_x)$ i.e. \ba
\varepsilon_x=1-Q-S/2\sqrt{2}\,.\ea Since $h(\varepsilon_x)=h(1-\varepsilon_x)$, this leads immediately to (\ref{boundstandard}).

\section{Loopholes in Bell experiments and DIQKD}\label{sec:loopholes}
The security of our protocol, like the security of any DIQKD
protocol, relies on the violation of a Bell inequality. All
experimental tests of Bell inequalities that have been made so
far, however, are subject to at least one of several loopholes and
therefore admit in principle a local description.
We discuss here how these loopholes can impact DIQKD protocols.

\subsection{Loopholes in Bell experiments}
Basically, a loophole-free Bell experiment requires two
ingredients: i) no information about the input of one party should
be known to the other party before she has produced her output;
ii) high enough detection efficiencies.

If the first requirement is not fulfilled, the premises of Bell's
theorem are not satisfied and it is trivial for a classical model
to account for the apparent non-locality of the observed
correlations. In practice, this means that the measurements should
be carried out sufficiently fast and far apart from each other so
that no sub-luminal influence can propagate from the choice of
measurement on one wing to the measurement outcome on the other
wing. Additionally, the local choices of measurement
should not be determined in advance,
i.e., they should be truly random events. Failure to satisfy one
of these two conditions is known as the locality
loophole~\cite{Bell}.

The second requirement arises from the fact that in practice not
all signals are detected by the measuring devices, either
because of inefficiencies in the devices themselves, or because of
particle losses on the path from the source to the detectors. The
detection loophole~\cite{pearle} exploits the idea that it is a
local variable that determines whether a signal will be registered
or not. The particle is detected only if the setting of the
measuring device is in agreement with a predetermined scheme. In
this way, apparently non-local correlations can be reproduced by a
purely local model provided that the efficiency $\eta$ of the
detectors is below a certain threshold. In general the efficiency
necessary to rule out a local description depends on the Bell
inequality that is tested, and is quite high for Bell inequalities
with low numbers of inputs and outputs (for the CHSH inequality,
one must have $\eta> 82.8\%$). It is an open question whether
there exist Bell inequalities (with reasonably many inputs and
outputs) allowing significantly lower detection efficiencies (see
e.g.~\cite{zoology,I4422,Vertesi2}). From the point of view of the
data analysis, to decide whether an experiment with inefficient
detector has produced a genuine violation of a Bell inequality,
all measurement events, including no-detection events, should be
taken into account in the non-locality test.

All Bell experiments performed so far suffer from (at least) one
of the two above loopholes. On the one hand, photonic experiments
can close the locality loophole~\cite{aspect,weihs,tittel}, but
cannot reach the desired detection efficiencies. On the other
hand, experiments carried out on entangled
ions~\cite{rowe,matsukevich} manage to close the detection
loophole, but are unsatisfactory from the point of view of
locality. Note that other loopholes or variants of the above
loophole have also been identified, such as the coincidence-time
loophole~\cite{gill}, but these are not as problematic.

\subsection{Loopholes from the perspective of DIQKD}
When considering the implications of these loopholes for DIQKD, a
first point to realize is that they are mainly a technological
problem, but do not in any way undermine the concept of DIQKD
itself. An eavesdropper trying to exploit one of the above
loopholes would clearly have to tamper with Alice and Bob's
devices, but it is not necessary for Alice and Bob to ``trust'' or
characterize the inner working of their devices to be sure that
all loopholes are closed. This can be decided solely by looking at
the classical input-output relations produced by the quantum
devices (and possibly their timing). In other words, we do not
have to leave the paradigm of DIQKD to guarantee the security of
the protocol (though of course with present-day technology it
might be difficult to construct  devices that pass such security
tests).

A second important observation is that there is a fundamental
difference between a Bell experiment whose aim is to establish the
non-local character of Nature and a quantum key distribution
scheme based on the violation of a Bell inequality. In the first
case, we are trying to rule out a whole set of models of Nature
(including models that can overcome the laws of physics as they
are currently known), while in the second case, we are merely
fighting an eavesdropper limited by the laws of quantum physics.

Seen in this light, the locality loophole is not problematic in
our context. In usual Bell experiments, the locality loophole is
dealt with by enforcing a space-like separation between Alice and
Bob. This guarantee that no sub-luminal signals (including signals
mediated by some yet-unknown theory) could have traveled between
Alice's and Bob's devices. In the context of DIQKD, it is
sufficient to guarantee that no quantum signals (e.g. no photon)
can travel from Alice to Bob. This can be enforced by a proper
isolation of Alice's and Bob's locations. As stated in
Section~1.1, we make here the basic assumption, shared by usual
QKD and without which cryptography wouldn't make any sense, that
Alice's and Bob's locations are secure, in the sense that no
unwanted information can leak out to the outside. Whether this
condition is fulfilled is an important question in practice, but
it is totally alien to quantum key distribution, whose aim is to
establish a secret key between two parties given that this
assumption is satisfied. In a similar way, we assume here, as in
usual QKD, that Alice and Bob choose their measuring settings with
trusted random number generators whose outputs are unknown to Eve.
The locality loophole is thus not a fundamental loophole in the
context of DIQKD and can be dealt with using today's technology.

The detection loophole, on the other hand, is a much more
complicated issue. Experimental tests of non-locality circumvent
this problem by discarding no-detection events and recording only
the events where both measuring devices have produced an answer.
This amounts to perform a post-selection on the measurement data.
This post-selection is usually justified by the fair sampling
assumption, which says that the sample of detected particles is a
fair sample of the set of all particles, i.e., that there are no
correlations between the state of the particles and their
detection probability. Although it may be very reasonable to
expect such a condition to hold for any realistic model of Nature,
it is clearly unjustified in the context of DIQKD, where we assume
that the quantum devices are provided by an untrusted party
\cite{larsson,lo}. In our context, it is thus crucial to close the
detection loophole. This has already been done for some
experiments \cite{rowe,matsukevich} although not yet on distances
relevant for QKD.

Note that a proper security analysis of DIQKD with inefficient
detectors has to take into account all measurement outcomes
produced by the devices, which in our case would include the
outcomes ``1'', ``-1'', and the no-detection outcome ``$\bot$''. A
possible strategy to apply our proof to this new situation simply
consists in viewing the absence of a click ``$\bot$'' as a ``-1''
outcome, thus replacing a 3-output device by an effective 2-output
device. To give an idea of the amount of detection inefficiency
that can be tolerated in this way, we have plotted in
Figure~\ref{figdet} the key rate as a function of the efficiency
of the detectors for the ideal set of quantum correlations
that give the maximal violation of the CHSH inequality, obtained
when measuring a $\ket{\phi^+}$ state. The key rate is given by
Eq.~\eqref{keyrate} with $Q=\eta(1-\eta)$ and $S= 2\sqrt
2\eta^2+2(1-\eta)^2$.
\begin{figure}\begin{center}
\includegraphics[scale=0.6]{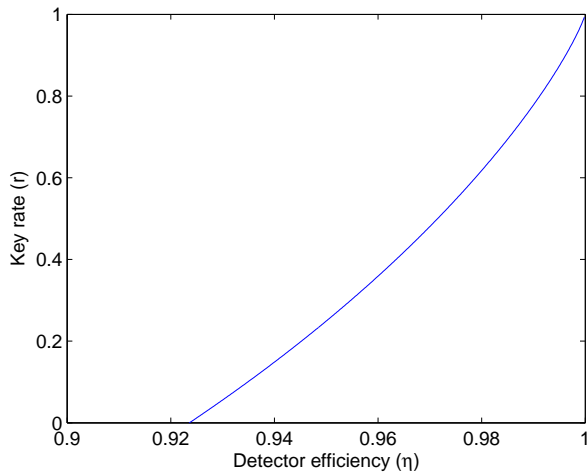}
\caption{Key rate as a function of detection efficiency for the ideal correlations coming from the maximally entangled state $\ket{\phi^+}$ and satisfying $Q=0$ and $S=2\sqrt{2}$, obtained by replacing the absence of a click by the outcome $-1$. The efficiency threshold above which a positive key rate can be extracted is $\eta=0.924$. \label{figdet}}
\end{center}\end{figure}

\subsection{Ideas for overcoming the detection loophole}

As mentioned above, the experiment of \cite{matsukevich}, which is based on
entanglement swapping between two ions separated by about 1 meter, is immune to the detection loophole. A natural way to implement our DIQKD protocol would thus be to improve this experiment. This would require extending the distance between the ions, but also improving the visibility and significantly
improving the data rate (currently one event every 39 seconds). This approach could of course in principle also be implemented with neutral atoms, quantum dots, etc. Here we discuss ideas on how
the problem of the detection loophole could be solved (at least partially) within an all photonic implementation, using heralded quantum memories and trusted detectors.

In a realistic quantum key distribution scenario there are
basically two kinds of losses that should be studied separately:
line losses and detector losses. Line losses are due (in practice)
to the imperfections of the quantum channel between Alice and Bob.
However, as far as the theoretical security analysis is concerned,
these losses should be assumed to be the result of Eve's actions,
since Alice and Bob do not control the quantum channel. One
possibility for Alice and Bob to overcome this problem is to use
heralded quantum memories. Using this technique, Alice and Bob can
know whether their respective memory device is loaded or not, that
is whether a photon really arrived in their device or not. In the
case that both memories are loaded, they release the photons and
perform their measurements. This procedure thus implements a kind
of quantum non-demolition measurement of the incoming states,
which allows Alice and Bob to get rid of the losses of the quantum
channel. This should be realizable within a few years, thanks to
the development of quantum repeaters \cite{TittelRev,Qinternet}.

The second type of losses, the detector losses, are probably more
crucial. We can, however, consider the situation where Alice's and
Bob's detectors are not part of the uncharacterized quantum
devices. That is, the quantum devices of Alice and Bob are viewed
as black-boxes that receive some classical input and produce an
output signal which is later detected, and transformed into the
final classical outcome, by a separate detector\footnote{Note, we
are not considering here the ``detector'' as a complete
measurement device, but only as the part of the device that clicks
or not whenever it is hit by one or several photons. In
particular, all the machinery that determines the choice of
measurement bases (and which may include beam-splitters,
polarizers, etc.) is still assumed to be part of the black-box
device.}. The detectors may be assumed to be trusted by and under
the control of Alice and Bob or they can be tested independently
from the rest of the quantum devices. Alice and Bob can for
instance do a tomography of their quantum detectors
\cite{tomo_det99,tomo_det01,tomo_det04,tomo_det08}, which consists
in determining the measurement that these detectors
actually perform. Such detector tomography, which has been
recently demonstrated experimentally in \cite{tomo_det08}, clearly
limits Eve's ability to exploit the detection loophole. This kind
of analysis may require to elaborate counter-measures against some
sort of a trojan-horse attacks on the detectors, in which Alice's
device (manufactured by Eve), sometimes sends nothing and
sometimes sends bright pulses in order to ensure that a detection
occurs. We believe that the power of such attacks can be severely
constrained by placing multiple detectors instead of one at each
output mode of Alice's and Bob's devices \cite{trojan}.

In the scenario that we  outlined in the preceding paragraph,
we have made a move to a situation that is intermediate between
usual QKD, where all devices are assumed to be trusted, and DIQKD,
where all quantum devices are untrusted. In this new situation,
Alice and Bob either need to trust their detectors (in the same
way as that they trust their random number generators or the
classical devices) or they need to test them with a trusted
calibration device (that they should get from a different provider
than Eve).  Whether this is a reasonable or practical scenario to
consider depends on the respective difficulty of testing the
detectors vs the entire quantum devices, and on the advantages
that may follow from trusting part of the quantum devices (this
may still allow for instance to forget about side-channels, or
imperfections in the measurement bases).

\section{Discussion and open questions}

Identifying the minimal set of physical assumptions allowing
secure key distribution is a fascinating problem, both from a
fundamental and an applied point of view. DIQKD possibly
represents the ultimate step in this direction, since its security
relies only on a fundamental set of basic assumptions: (i) access
to secure locations, (ii) trusted randomness, (iii)
trusted classical processing devices, (iv) an authenticated
classical channel, (v) and finally the general validity of a
physical theory, quantum theory. In this work, we have shown that
for the restricted scenario of collective attacks, secure DIQDK is
indeed possible. There remain, however, plenty of interesting open
questions in the device-independent scenario.

From an applied point of view, the most relevant questions are
related to loopholes in Bell tests, particularly the detection
loophole, as discussed in Section~\ref{sec:loopholes}. The detection loophole, usually seen mostly as a foundational problem, thus becomes a relevant issue from an applied perspective, with important implications for cryptography \footnote{Recently, the role of the detector efficiency loophole in standard QKD has been analyzed in Ref.~\cite{lutkenhaus}}. From a
theoretical point of view, it would be highly desirable to extend
the security proof presented here to other scenarios, the ultimate
goal being a general security proof. We list below several
possible directions to extend our results.
\begin{itemize}
    \item As we discussed, the violation of a Bell inequality represents a
    necessary condition for secure DIQKD. It would be interesting, then, to
    consider other protocols, based on different Bell inequalities,
    even under the additional
    assumption of collective attacks. Some interesting questions are:
    (i) how does the security of DIQKD change when using larger
    alphabets, especially when compared with
    standard QKD~\cite{qudit1,qudit2,errfiltr}?
    (ii) can one establish
    more general relations bounding Eve's information from the amount
    of observed Bell inequality violation?
    \item A key ingredient in our security
    proof is the fact that it is possible to reduce the
    whole analysis to a two-qubit optimization problem. This is because
    any pair of quantum binary measurements can be decomposed as the direct sum of pairs of measurements
    acting on two-dimensional spaces. Do similar results exist for
    more complex scenarios? More generally, are all possible bipartite quantum
    correlations for $m$ measurements of $n$ outcomes, for finite
    $m$ and $n$,
    attainable by measuring finite-dimensional
    systems~\cite{qcorr}? Some progress on this question was
    recently obtained in Refs.~\cite{vertpal,briet}, where it was shown that infinite
    dimensional systems are needed to generate all
    two-outcome ($n=2$) quantum correlations, thus proving a conjecture made in Ref. \cite{dimH}. The proof of this result,
    however, is only valid when $m\rightarrow\infty$.
        \item Our security analysis works for the case of
    one-way reconciliation protocols. How is the security of the protocol
    modified when two-way reconciliation techniques are
    considered? Does then a Bell inequality violation represent a sufficient condition for
    security? In this direction, it was shown in Ref.~\cite{MAG} that
    all correlations violating a Bell inequality contain some
    form of secrecy, although not necessarily distillable into a
    key.
    \item In the spirit of removing the largest number of assumptions necessary for the security of QKD, an interesting extension has been anticipated by Kofler, Paterek and Brukner \cite{kpb}. They noticed that quantum cryptography may be secure even when one allows the eavesdropper to have partial information about the measurement settings. To illustrate this scenario on our protocol, we suppose that in each run Eve has some probability to make a correct guess on the choice of measurement settings. The best way to model this situation from the perspective of Eve is to have an additional bit $f$ (``flag'') such that $f = 1$ guarantees her guess to be correct, while $f = 0$ implies that her guess is uncorrelated with the real settings: indeed, any scenario with partial knowledge may be obtained by Eve forgetting the value of $f$. We suppose that the case $f=1$ happens with probability $q$ and $f=0$ with probability $1-q$. When Eve has full information on Bob's measurement choice, she can fix in advance Alice's and Bob's outcome while at the same time engineering a violation of CHSH up to the algebraic limit of 4. If Eve follows this strategy, the observed violation will then be $S=4q+(1-q)S'$ and the security bound will be given by \ba\chi(B_1:E)&=&q+(1-q)h\left([1+\sqrt{(S'/2)^2-1}] /2\right)\,.\ea
        This proves that there cannot be any security if $q\geq \sqrt{2}-1\approx 41\%$. It would be interesting to consider more elaborated situations, e.g. those in which Eve may have partial information about sequences of measurement settings.
    \item In standard QKD, it is known that security against collective attacks implies security against the most general type of attacks. This follows from an application of the exponential quantum De
    Finetti theorem of Ref.~\cite{renner}, but can also be proven trough a direct argument \cite{coll}.
    Does a similar result
    hold in the device-independent scenario? In particular can the exponential de Finetti theorem \cite{renner} be extended to the device-independent scenario? If this was the case, our security proof would automatically be promoted to a general security proof.
    Some preliminary results in this direction have
    been obtained in Refs.~\cite{df1,barrett09}, where two different versions of
    a de Finetti theorem for general
    no-signaling probability distributions were derived.

    Or could it be that collective attacks are strictly weaker than general attacks in the device-independent scenario? Here the main difficulty for deriving a general security proof is that, contrary to standard QKD, the devices may behave in a way that depends on previous inputs and outputs. In particular, the measurement setting could be different in each round and depend on the results of previous measurements. It is not clear what role such memory effects play in the device-independent scenario, and whether it would be possible to find an explicit attack exploiting them which would outperform any collective (hence memoryless) attack.
    \item A final possibility would be to adapt the techniques developed
    in Refs.\cite{masanes-winter,masanes}, valid for the general
    case of no-signaling correlations, to the quantum
    scenario. The results of these works prove the security of QKD protocols against
    eavesdroppers limited only by the no-signaling principle. Unfortunately, the corresponding key rates and noise-resistance are at present unpractical when applied to correlations that can be obtained by measuring quantum states.
    A natural question is then: how can one incorporate the constraints associated to the
    quantum formalism to these techniques in order to obtain
    better key rates and better noise-resistance for quantum correlations?

\end{itemize}

\section*{Acknowledgements}\addcontentsline{toc}{section}{Acknowledgements}
 We are grateful to C.~Branciard,
I.~Cirac, A.~Ekert, A.~Kent, Ll. Masanes, T.~Paterek, R.~Renner and C.~Simon for fruitful
discussions. We acknowledge financial support from the Swiss NCCR
``Quantum Photonics" and National Science Foundation (SNSF), the
EU Qubit Applications Project (QAP) Contract number 015848, the
Spanish MEC Consolider QOIT and FIS2007-60182 projects and a
``Juan de la Cierva" grant, la Generalitat de Catalunya and Caixa
Manresa, the National Research Foundation and the Ministry of
Education, Singapore, and the IAP project Photonics@be of the
Belgian Science Policy.

\addcontentsline{toc}{section}{References}
\bibliographystyle{unsrt}
\bibliography{DIQKD}

\begin{thebibliography}{10}

\bibitem{BBM92}
C.~H. Bennett, G.~Brassard, and N.~D. Mermin.
\newblock {\em Phys. Rev. Lett.}, 68:557, 1992.

\bibitem{BB84}
C.~H. Bennett and G.~Brassard.
\newblock In {\em Proceedings of the IEEE International Conference on
  Computers, Systems and Signal Processing}, page 175, Bangalore, India, 1984.

\bibitem{Magniez}
F.~Magniez, D.~Mayers, M.~Mosca, and H.~Ollivier.
\newblock In {\em Proceedings of 33rd International Colloquium on Automata,
  Languages and Programming}, volume 4051 of {\em Lecture Notes in Computer
  Science}, page~72. Verlag, 2006.
\newblock See Appendix A.

\bibitem{AGM}
A.~Ac\'{\i}n, N.~Gisin, and L.~Masanes.
\newblock {\em Phys. Rev. Lett.}, 97:120405, 2006.

\bibitem{Bell}
J.~Bell.
\newblock {\em Speakable and Unspeakable in Quantum Mechanics}.
\newblock Cambridge University Press, 1987.

\bibitem{Ekert}
A.~Ekert.
\newblock {\em Phys. Rev. Lett.}, 67:661, 1991.

\bibitem{Mayers}
D.~Mayers and A.~Yao.
\newblock {\em Quantum Inform. Comput.}, 4:273, 2004.

\bibitem{BHK}
J.~Barrett, L.~Hardy, and A.~Kent.
\newblock {\em Phys. Rev. Lett.}, 95:010503, 2005.

\bibitem{nlinfo}
J.~Barrett, N.~Linden, S.~Massar, S.~Pironio, S.~Popescu, and D.~Roberts.
\newblock {\em Phys. Rev. A}, 71:022101, 2005.

\bibitem{BKP}
J.~Barrett, A.~Kent, and S.~Pironio.
\newblock {\em Phys. Rev. Lett.}, 97:170409, 2006.

\bibitem{AMP}
A.~Ac\'{\i}n, S.~Massar, and S.~Pironio.
\newblock {\em New J. Phys.}, 8:126, 2006.

\bibitem{Scarani}
V.~Scarani, N.~Gisin, N.~Brunner, Ll. Masanes, S.~Pino, and A.~Ac\'{\i}n.
\newblock {\em Phys. Rev. A}, 74:042339, 2006.

\bibitem{masanes-winter}
Ll. Masanes, A.~Winter, R.~Renner, J.~Barrett, and M.~Christandl.
\newblock arXiv:quant-ph/0606049v3.

\bibitem{masanes}
Ll. Masanes.
\newblock arXiv:0807.2158.

\bibitem{aci07}
A.~Ac\'{\i}n, N.~Brunner, N.~Gisin, S.~Massar, S.~Pironio, and V.~Scarani.
\newblock {\em Phys. Rev. Lett.}, 98:230501, 2007.

\bibitem{coll}
M.~Christandl, R.~K\"onig, and R.~Renner.
\newblock {\em Phys. Rev. Lett.}, 102:020504, 2009.

\bibitem{chsh}
J.~F. Clauser, M.~A. Horne, A.~Shimony, and R.~A. Holt.
\newblock {\em Phys. Rev. Lett.}, 23:880, 1969.

\bibitem{cir80}
B.~S. Tsirelson.
\newblock {\em Lett. Math. Phys.}, 4:83, 1980.

\bibitem{DW}
I.~Devetak and A.~Winter.
\newblock {\em Proc. R. Soc. A}, 461:207, 2005.

\bibitem{SP}
P.~W. Shor and J.~Preskill.
\newblock {\em Phys. Rev. Lett.}, 85:441, 2000.

\bibitem{vw}
F.~Verstraete and M.~M. Wolf.
\newblock {\em Phys. Rev. Lett.}, 89:170401, 2002.

\bibitem{NielsenChuang}
M.~A. Nielsen and I.~L. Chuang.
\newblock {\em Quantum computation and quantum information}.
\newblock Cambridge University Press, Cambridge, 2000.

\bibitem{tsirelson}
B.~Tsirelson.
\newblock {\em Hadronic Journal Supplement}, 8:329, 1993.

\bibitem{masanes2}
Ll. Masanes.
\newblock {\em Phys. Rev. Lett.}, 97:050503, 2006.

\bibitem{horodecki}
M.~Horodecki, P.~Horodecki, and M.~Horodecki.
\newblock {\em Phys. Lett. A}, 200:340, 1995.

\bibitem{revqkd}
V.~Scarani, H.~Bechmann-Pasquinucci, N.J. Cerf, M.~Du\v{s}ek, N.~L\"utkenhaus,
  and M.~Peev.
\newblock arXiv:0802.4155.

\bibitem{pearle}
P.~M. Pearle.
\newblock {\em Phys. Rev. D}, 2:1418, 1970.

\bibitem{zoology}
S.~Massar, S.~Pironio, J.~Roland, and B.~Gisin.
\newblock {\em Phys. Rev. A}, 66:052112, 2002.

\bibitem{I4422}
N.~Brunner and N.~Gisin.
\newblock {\em Phys. Lett. A}, 372:3162, 2008.

\bibitem{Vertesi2}
K.~P\'{a}l and T.~V\'{e}rtesi.
\newblock {\em Phys. Rev. A}, 79:022120.

\bibitem{aspect}
A.~Aspect, J.~Dalibard, and G.~Roger.
\newblock {\em Phys. Rev. Lett.}, 49:1804, 1982.

\bibitem{weihs}
G.~Weihs, T.~Jennewein, C.~Simon, H.~Weinfurter, and A.~Zeilinger.
\newblock {\em Phys. Rev. Lett.}, 81:5039, 1998.

\bibitem{tittel}
W.~Tittel, J.~Brendel, H.~Zbinden, and N.~Gisin.
\newblock {\em Phys. Rev. Lett.}, 81:3563, 1998.

\bibitem{rowe}
M.~A. Rowe, D.~Kielpinski, V.~Meyer, C.~A. Sackett, W.~M. Itano, C.~Monroe, and
  D.~J. Wineland.
\newblock {\em Nature}, 409:791, 2001.

\bibitem{matsukevich}
D.~N. Matsukevich, P.~Maunz, D.~L. Moehring, S.~Olmschenk, and C.~Monroe.
\newblock {\em Phys. Rev. Lett.}, 100:150404, 2008.

\bibitem{gill}
J.-\AA. Larsson and R.~D. Gill.
\newblock {\em Europhys. Lett.}, 67:707, 2004.

\bibitem{larsson}
J.-\AA. Larsson.
\newblock {\em Quant. Inf. Comput}, 2:434, 2002.

\bibitem{lo}
Y.~Zhao, C.-H.~Fred Fung, B.~Qi, C.~Chen, and H.-K. Lo.
\newblock {\em Phys. Rev. A}, 78:042333.

\bibitem{TittelRev}
W.~Tittel, M.~Afzelius, R.~Cone, T.~Chaneli\`{e}re, S.~Kr\"{o}ll, S.~A.
  Moiseev, and M.~Sellars.
\newblock arXiv:0810.0172.

\bibitem{Qinternet}
J.~Kimble.
\newblock {\em Nature}, 453:1023, 2008.

\bibitem{tomo_det99}
A.~Luis and L.~L. Sanchez-Soto.
\newblock {\em Phys. Rev. Lett.}, 83:3573, 1999.

\bibitem{tomo_det01}
J.~Fiurasek.
\newblock {\em Phys. Rev. A}, 64:024102, 2001.

\bibitem{tomo_det04}
G.~M. D'Ariano, L.~Maccone, and P.~L. Presti.
\newblock {\em Phys. Rev. Lett.}, 93:250407, 2004.

\bibitem{tomo_det08}
J.~S. Lundeen, A.~Feito, H.~Coldenstrodt-Ronge, K.~L. Pregnell, C.~Silberhorn,
  T.~C. Ralph, J.~Eisert, M.~B. Plenio, and I.~A. Walmsley.
\newblock {\em Nat. Phys.}, 5:27.

\bibitem{trojan}
N.~Gisin, S.~Fasel, B.~Kraus, H.~Zbinden, and G.~Ribordy.
\newblock {\em Phys. Rev. A}, 73:022320, 2006.

\bibitem{lutkenhaus}
X.~Ma, T.~Moroder, and N.~L\"utkenhaus.
\newblock arXiv:0812.4301.

\bibitem{qudit1}
D.~Bruss and C.~Macchiavello.
\newblock {\em Phys. Rev. Lett.}, 88:127901, 2002.

\bibitem{qudit2}
N.~J. Cerf, M.~Bourennane, A.~Karlsson, and N.~Gisin.
\newblock {\em Phys. Rev. Lett.}, 88:127902, 2002.

\bibitem{errfiltr}
N.~Gisin, N.~Linden, S.~Massar, and S.~Popescu.
\newblock {\em Phys. Rev. A}, 72:012338, 2005.

\bibitem{qcorr}
M.~Navascu\'es, S.~Pironio, and A.~Ac\'\i n.
\newblock {\em New J. Phys.}, 10:073013, 2008.

\bibitem{vertpal}
T.~V\'ertesi and K.~F. P\'al.
\newblock arXiv:0812.1572.

\bibitem{briet}
H.~Buhrman J.~Briet and B.~Toner.
\newblock arXiv:0901.2009.

\bibitem{dimH}
N.~Brunner, S.~Pironio, A.~Acin, N.~Gisin, A.A. Methot, and V.~Scarani.
\newblock {\em Phys. Rev. Lett.}, 100:210503, 2008.

\bibitem{MAG}
Ll. Masanes, A.~Ac\'\i n, and N.~Gisin.
\newblock {\em Phys. Rev. A}, 73:012112, 2006.

\bibitem{kpb}
C.~Brukner J.~Kofler, T.~Paterek.
\newblock {\em Phys. Rev. A}, 73:022104, 2006.

\bibitem{renner}
R.~Renner.
\newblock {\em Nature Phys.}, 3:645, 2007.

\bibitem{df1}
M.~Christandl and B.~Toner.
\newblock arXiv:0712.0916.

\bibitem{barrett09}
J.~Barrett and M.~Leifer.
\newblock {\em New J. Phys.}, 11:033024, 2009.

\end{thebibliography}

\end{document}